\newcommand*{\myfontb}{\fontfamily{lmr}\selectfont}
\newcommand*{\coloneq}{\mathrel{\rlap{%
			\raisebox{0.3ex}{$\m@th\cdot$}}%
		\raisebox{-0.3ex}{$\m@th\cdot$}}%
	=}
\newcommand {\myvec}[1] {{\mbox{\boldmath $#1$}}}
\newcommand {\mymat}[1]  {{\mbox{\boldmath $#1$}}}
\newcommand {\myten}[1]  {{\mathbfcal{#1}}}
\DeclareMathAlphabet\mathbfcal{OMS}{cmsy}{b}{n}
\newenvironment{packed_enum}{
	\begin{enumerate}[(i)]
		\setlength{\itemsep}{0.5pt}
		\setlength{\parskip}{0.5pt}
		\setlength{\parsep}{0.5pt}
	}{\end{enumerate}}
\newcommand{\etal}{\textit{et al.}}
\newcommand {\A} {\mymat{A}}
\newcommand {\W} {\mymat{W}}
\newcommand {\U} {\mymat{U}}
\newcommand {\mLambda} {\mymat{\Lambda}}
\newcommand {\hA} {\widehat{\A}}
\newcommand {\bA} {\mybar{\A}}
\newcommand {\B} {\mymat{B}}
\newcommand {\J} {\mymat{J}}
\newcommand {\tA} {\widetilde{\A}}
\newcommand {\C} {\mymat{C}}
\newcommand {\D} {\mymat{D}}
\newcommand {\Z} {\mymat{Z}}
\newcommand {\mGamma} {\mymat{\Gamma}}
\newcommand {\tmGamma} {\widetilde{\mGamma}}
\newcommand {\E} {\mymat{E}}
\newcommand {\Ep} {\mymat{\mathcal{E}}}
\newcommand {\tC} {\widetilde{\C}}
\newcommand {\M} {\mymat{M}}
\newcommand {\uvarep} {\myvec{\varepsilon}}
\renewcommand {\P} {\mymat{P}}
\newcommand {\tP} {\widetilde{\P}}
\newcommand {\Q} {\mymat{Q}}
\newcommand {\R} {\mymat{R}}
\newcommand {\tR} {\widetilde{\R}}
\newcommand {\hR} {\widehat{\R}}
\newcommand {\I} {\mymat{I}}
\newcommand {\X} {\mymat{X}}
\newcommand {\F} {\mymat{F}}
\newcommand {\ue} {\myvec{e}}
\newcommand {\ua} {\myvec{a}}
\newcommand {\uba} {\mybar{\ua}}
\newcommand {\um} {\myvec{m}}
\newcommand {\uc} {\myvec{c}}
\newcommand {\ux} {\myvec{x}}
\newcommand {\ur} {\myvec{r}}
\newcommand {\up} {\myvec{p}}
\newcommand {\uv} {\myvec{v}}
\newcommand {\uo} {\myvec{0}}
\newcommand {\us} {\myvec{s}}
\newcommand {\uy} {\myvec{y}}
\newcommand {\uvarphi} {\myvec{\varphi}}
\newcommand {\utheta} {\myvec{\theta}}
\newcommand {\uvartheta} {\myvec{\vartheta}}
\newcommand {\tuvartheta} {\widetilde{\uvartheta}}
\newcommand {\hutheta} {\widehat{\myvec{\theta}}}
\newcommand {\ep} {\mathcal{E}}
\newcommand {\Rset} {\mathbb{R}}
\newcommand {\Cset} {\mathbb{C}}
\newcommand {\Eset} {\mathbb{E}}
\newcommand {\Nset} {\mathbb{N}}
\newcommand {\Diag} {\text{\normalfont Diag}}
\newcommand {\Tr} {\text{\normalfont Tr}}
\newcommand {\tps} {\rm{T}}
\newcommand {\LS} {\text{\tiny LS}}
\newcommand {\ML} {\text{\tiny ML}}
\newcommand {\OWNLLS} {\text{\tiny OWNLLS}}
\newcommand {\EJD} {\text{\tiny EJD}}
\newcommand {\KL} {\text{\tiny KL}}
\newcommand {\KLD} {\text{\tiny KLD}}
\newcommand {\tenX} {\myten{X}}
\newcommand {\tenR} {\myten{R}}
\newcommand {\htenR} {\widehat{\tenR}}
\newcommand {\cpd} {\text{\tiny CPD}}
\newcommand {\argmin} {\rm{argmin}}
\newcommand {\argmax} {\rm{argmax}}
\newcommand\norm[1]{\left\lVert#1\right\rVert}
\newcommand {\ttheta} {\widetilde{\theta}}
\newcommand {\tutheta} {\widetilde{\utheta}}
\newcommand {\htheta} {\widehat{\theta}}
\newcommand {\tua} {\widetilde{\ua}}
\newcommand {\hua} {\widehat{\ua}}
\newcommand {\huvarphi} {\widehat{\uvarphi}}
\newcommand {\hur} {\widehat{\ur}}
\DeclareMathOperator{\atantwo}{atan2}
\newsavebox\myboxA
\newsavebox\myboxB
\newlength\mylenA
\newcommand*\mybar[2][0.75]{%
	\sbox{\myboxA}{$\m@th#2$}%
	\setbox\myboxB\null
	\ht\myboxB=\ht\myboxA%
	\dp\myboxB=\dp\myboxA%
	\wd\myboxB=#1\wd\myboxA
	\sbox\myboxB{$\m@th\overline{\copy\myboxB}$}
	\setlength\mylenA{\the\wd\myboxA}
	\addtolength\mylenA{-\the\wd\myboxB}%
	\ifdim\wd\myboxB<\wd\myboxA%
	\rlap{\hskip 0.5\mylenA\usebox\myboxB}{\usebox\myboxA}%
	\else
	\hskip -0.5\mylenA\rlap{\usebox\myboxA}{\hskip 0.5\mylenA\usebox\myboxB}%
	\fi}
\newtheorem{thm}{Theorem}
\theoremstyle{definition}
\newtheorem{definition}{Definition}
\begin{document}

\title{Blind Direction-of-Arrival Estimation in Acoustic Vector-Sensor Arrays via Tensor Decomposition and Kullback-Leibler Divergence Covariance Fitting}

\author{Amir Weiss
	
	\thanks{The author is with the Dept. of Computer Science and Applied Mathematics, Faculty of Mathematics and Computer Science, Weizmann Institute of Science,
		234 Herzl Street, Rehovot 7610001, Israel, e-mail:
		amir.weiss@weizmann.ac.il.}
}

\maketitle

\begin{abstract}
A blind Direction-of-Arrivals (DOAs) estimate of narrowband signals for Acoustic Vector-Sensor (AVS) arrays is proposed. Building upon the special structure of the signal measured by an AVS, we show that the covariance matrix of all the received signals from the array admits a natural low-rank 4-way tensor representation. Thus, rather than estimating the DOAs directly from the raw data, our estimate arises from the unique parametric Canonical Polyadic Decomposition (CPD) of the observations' Second-Order Statistics (SOSs) tensor. By exploiting results from fundamental statistics and the recently re-emerging tensor theory, we derive a consistent blind CPD-based DOAs estimate without prior assumptions on the array configuration. We show that this estimate is a solution to an equivalent approximate joint diagonalization problem, and propose an \textit{ad-hoc} iterative solution. Additionally, we derive the Cram\'er-Rao lower bound for Gaussian signals, and use it to derive the iterative Fisher scoring algorithm for the computation of the Maximum Likelihood Estimate (MLE) in this particular signal model. We then show that the MLE for the Gaussian model can in fact be used to obtain improved DOAs estimates for non-Gaussian signals as well (under mild conditions), which are optimal under the Kullback-Leibler divergence covariance fitting criterion, harnessing additional information encapsulated in the SOSs. Our analytical results are corroborated by simulation experiments in various scenarios, which also demonstrate the considerable improved accuracy w.r.t.\ Zhang \etal's state-of-the-art blind estimate \cite{zhang2012trilinear} for AVS arrays, reducing the resulting root mean squared error by up to more than an order of magnitude.
\end{abstract}
\vspace{-0.3cm}
\begin{IEEEkeywords}
Direction-of-arrival (DOA) estimation, acoustic vector-sensor, array processing, tensor decomposition, maximum likelihood, Kullback-Leibler divergence.
\end{IEEEkeywords}
\vspace{-0.3cm}
\section{Introduction}\label{sec:intro}
An Acoustic Vector-Sensor (AVS) is comprised of an omni-directional microphone and two/three particle velocity transducers aligned along the orthogonal coordinate axes in a two/three dimensional space \cite{nehorai1994acoustic}. It measures the acoustic pressure as well as the acoustic particle velocities in each one of the coordinate axes, thus providing a full description of the acoustic field at a given location in space \cite{de1996mu,yntema2006four}. As one would expect, in the context of passive array-processing, AVS arrays enable enhanced capabilities relative to their equivalent-aperture traditional (scalar) microphone arrays \cite{hawkes1999effects,cao2016acoustic,krishnaprasad2017doa}. Consequently, and since these devices are already practically feasible \cite{de2007microflown}, AVS arrays are the cornerstone in a wide variety of emerging applications such as aircraft acoustic detection, localization and tracking \cite{de2011detection,lo2017flight}, battlefield acoustics classification \cite{de2011acoustic}, and underwater acoustic communication \cite{song2011experimental} and source enumeration \cite{wu2014source}, to name but a few.

Within many of these applications, Direction-of-Arrival (DOA) estimation is a fundamental task, which plays a key role in the overall successful operation (e.g., localization, which reduces to DOA and range estimation). Of course, as one of the most prominent signal processing problems in general, and specifically in array processing, DOA estimation in the context of AVS arrays has already been extensively addressed in the literature during the past two and a half decades, as briefly reviewed in what follows.
\vspace{-0.5cm}
\subsection{Previous Work: DOA Estimation with AVSs}\label{subsec:previouswork}
In their seminal paper \cite{nehorai1994acoustic}, Nehorai and Paldi derived the Cram\'er-Rao Lower Bound (CRLB) on the Mean Squared Error (MSE) of any unbiased DOAs estimate, and proposed two algorithms---the Intensity-Based and the Velocity-Covariance-Based Algorithms---for DOA estimation, though for a single source with a single AVS only. For this (limited) scenario, the maximum steered response power and the Maximum Likelihood Estimate (MLE, as a special case of the former), were derived by Levin \etal\ in \cite{levin2011direction} and \cite{levin2012maximum}, resp. Multisource DOA estimation in a reverberant environment, still using a single AVS, has been recently addressed by Wu \etal\ \cite{wu2018multisource}, where low-reverberant-single-source
points in the time-frequency domain are exploited.

For the extended Multiple-Sources Multiple-Sensors (MSMS) scenario, Hawkes and Nehorai considered in \cite{hawkes1998acoustic} both the conventional and the minimum-variance distortionless response beamforming DOAs estimates to demonstrate the improvement attained by using AVSs rather than traditional pressure sensors. Following this work, an abundance of methods have been proposed for various specific scenarios (whether for array- or signal-related properties), such as linear \cite{chen2004wideband}, circular \cite{zou2009circular,gur2017modal,shi2019real}, sparse \cite{li2010maximum,yuan2012coherent,rao2015doa} and nested \cite{pal2010nested,han2013improved,han2014direction} arrays, coherent signals \cite{chen2005coherent,yuan2010modified,palanisamy2012two}, one-bit measurements \cite{ramamohan2019blind} and a variety of others \cite{li2012improved,zhong2012particle,ramamohan2018uniaxial}. However, all these methods require perfect or (at least) partial prior knowledge of the array configuration, or, equivalently\footnote{Under some conventional, reasonable assumptions and/or approximations, such as a signal planar wavefront (``far-field") approximation.}, of the steering vectors parametric structure, in particular w.r.t.\ the sources' DOAs. Hence, these methods are typically sensitive to model inaccuracies/errors of this sort. This is exactly the point where our work comes into play w.r.t.\ to our novel contributions.

In this work, we consider blind DOAs estimation in AVS arrays within the framework of narrowband signals. Here, the term ``blind" implies that no prior assumption on the array configuration (/geometry) is used, hence the steering vectors parametric structure is deemed unknown. In fact, blind DOA estimation in the context of AVSs has so far been only sparsely addressed in the literature. A direction-finding and blind interference rejection algorithm was proposed by Wong in \cite{wong2010acoustic}, but only for up to three fast frequency-hop spread spectrum signals of unknown hop sequences, and using only a single AVS. Xiao \etal\ also proposed a blind DOA estimate for a single AVS only \cite{xiao2013blind}, based on the Joint Approximate Diagonalization of Eigen-matrices (JADE, \cite{cardoso1993blind}) algorithm, and therefore does not provide a solution for Gaussian signals. 

For AVS arrays, Zhang \etal\ proposed the Trilinear Decomposition-based (TriD) blind DOAs estimate for incoherent signals \cite{zhang2012trilinear}, and a similar TriD approach in \cite{zhang2013paralind} for coherent signals. However, the trilinear decomposition considered both in \cite{zhang2012trilinear} and \cite{zhang2013paralind} is exact only for the (less practical) noiseless signal model. Although the TriD methods perform quite well, the DOA estimates which stem from this approach were not shown analytically to be consistent or optimal. In contrast, on top of providing superior performance, our novel estimate is shown analytically (and demonstrated empirically) to be consistent \textit{regardless} of the SNR conditions, and provides optimal performance for Gaussian signals for \textit{any} SNR.
\vspace{-0.25cm}
\subsection{Blind DOAs Estimation: Motivation and Contributions}\label{subsec:motivationblind}
The motivation to address this blind scenario and develop a solution algorithm under this framework arises from several considerations. Firstly, by not assuming a specific array configuration (geometry/structure), the resulting solution algorithm could properly operate in different systems with arbitrary array configurations, and without the need for specific tuning prior to operation. Secondly, unknown inter-AVSs' gain and/or phase offsets (e.g., due to sensors mis-locations), which otherwise require a calibration procedure (e.g., \cite{song2014quasi}), are totally transparent to such a blind estimate. For example, many of the DOA estimation methods tailored to Uniform Linear Arrays (ULAs) exploit the special Toeplitz structure of the observations' spatial covariance matrix (e.g., \cite{paulraj1985direction} as one representative example). Clearly, the performance of such methods deteriorates rapidly in the presence of sensors error positioning, in contrast to the performance of a blind estimate, which remains (almost) indifferent to these errors. In other words, a blind algorithm is robust w.r.t.\ sensors error positioning. This robustness implies a significant practical advantage over other algorithms in terms of simplicity for the end-user, and may save time and resources, for example, when blind (``online") calibration is not possible and \textit{ad-hoc} transmission of a calibrating source is required. Thirdly, most algorithms are developed for an underlying signal model based on some physical approximation, e.g., the ``near-field" or ``far-field" approximations (e.g., \cite{tichavsky2001near} in the context of AVSs). As a result, in these cases the steering vectors' parametric representation (in particular, w.r.t.\ the DOAs) is also only an approximation, which brings along with analytical convenience an inherent modeling error \cite{tam2009cramer,tam2014hybrid}. This modeling error impairs the performance even for an optimal solution (in some well-defined sense) for this particular, approximated model. In contrast, a blind approach, in which the solution is not developed based on such (potential) modeling errors, would yield model-based errors free\footnote{In the respect explained above regarding the steering vectors' parametric structure, not entirely for all possible model-based errors.} DOA estimates. Lastly, a blind estimate can also successfully cope with faulty elements in the array \cite{liu2018robustness}, and maintains proper functionality for partially damaged arrays.

Motivated by the merits above, in this work we propose a novel blind DOA estimation algorithm for AVS arrays with arbitrary configurations, while making only a few \textit{a-priori} assumptions on the signal model. Building upon the AVS measurement model, we exploit the special (block) structure of the observations' covariance matrix, which naturally lends itself to the recently flourishing tensor formulation in the signal processing literature (e.g., \cite{comon2014tensors,han2014nested,cichocki2015tensor,balda2016first,sidiropoulos2017tensor,yeredor2018high,kanatsoulis2018hyperspectral,yeredor2019maximum}). As a natural continuum thereof, we employ a statistical approach and invoke tensor-calculus-related results, leading to a consistent DOAs estimate for \textit{any} SNR. We then show that this estimate can be improved by a second refinement phase via Kullback-Leibler Divergence (KLD) covariance fitting, which yields our proposed estimate. The main contributions of this paper are as follows:
\begin{itemize}
	\item \emph{Consistent blind DOAs estimation via tensor decomposition:} Based on the observations' empirical covariance matrix and a consistent noise variance estimate, we show that joint estimation of all the DOAs and their associated steering vectors in a MSMS scenario is (asymptotically) equivalent to a parametric Canonical Polyadic Decomposition (CPD, see \cite{domanov2013uniqueness} and reference therein)---sometimes termed as ``tensor rank decomposition" or ``parallel factor model"---of a $4$-mode tensor statistic. We show that the uniqueness theorem of quadrilinear decompositions of $4$-mode arrays due to Sidiropoulos and Bro (Theorem $2$ in \cite{sidiropoulos2000uniqueness}) grants this CPD-based estimate its consistency.
	\item \emph{Iterative solution algorithm of the CPD-based estimate:} We show that computation the CPD-based estimate amounts to a partially-parametric Approximate Joint Diagonalization (AJD) problem. Accordingly, we proposed an iterative solution, which is a modified version of the Alternating Columns-Diagonal Centers (AC-DC) algorithm proposed by Yeredor in \cite{yeredor2002non}. Our modified algorithm results in more accurate DOAs estimates than the ones a generic AJD algorithm would yield due to our tailored parametric adaptation. Further, the algorithm inherently yields estimates of the steering vectors, considered as nuisance parameters, which consequently also enable consistent blind separation of the latent sources.
	\item \emph{Performance bounds and optimal estimation for Gaussian signals:} For the particular case of Gaussian signals, we derive the CRLB on the MSE matrix of any unbiased estimate in joint estimation of \emph{all} the unknown deterministic model parameters, namely the steering vectors, the DOAs and the noise variance. In addition, based on the Fisher Information Matrix (FIM), and using the (already obtained) CPD-based consistent estimates as initial solutions, we propose a Maximum Likelihood (ML) ``refinement" phase, in which all the unknown parameters' MLEs are pursued via the Fisher Scoring Algorithm (FSA, \cite{jennrich1976newton}). As demonstrated in simulations, these refined estimates are asymptotically efficient, attaining the CRLB.
	\item \emph{KLD covariance fitting enhancement:} We show that the MLEs for the particular Gaussian signal model are also optimal under the KLD covariance fitting criterion \emph{regardless} of the underlying signal model, and can therefore be used for non-Gaussian signals as well in order to achieve significant performance enhancement. Accordingly, our final proposed DOAs estimate enjoys higher accuracy and robustness to the underlying signal model.
\end{itemize}

The rest of this paper is organized as follows. The following subsection contains an outline of our notations. In Section \ref{sec:problemformulation} we present the model under consideration and formulate our blind DOA estimation problem. The CPD-based (phase 1) estimates are presented in Section \ref{sec:proposedblindestimate}, followed by the iterative solution algorithm for their actual computation in Section \ref{sec:ModACDCalg}. We then consider in Section \ref{subsec:CRBound} the Gaussian signal model, and derive its respective CRLB, as well as the update equations of the FSA for the computation of the MLEs. Our proposed KLD-based (phase 2) estimates are presented in Section \ref{sec:KLDcovFit}, followed by simulation results in Section \ref{sec:simulationresults}, substantiating and demonstrating empirically our analytical results. Concluding remarks are given in Section \ref{sec:conclusion}.
\vspace{-0.3cm}
{\subsection{Notations and Preliminaries}\label{subsec:notations}
We use $x, \ux$, $\X$ and $\tenX$ for a scalar, column vector, matrix and tensor, resp. The superscripts $(\cdot)^{\tps}$, $(\cdot)^*$, $(\cdot)^{\dagger}$, $(\cdot)^{-1}$ and $(\cdot)^{+}$ denote the transposition, complex conjugation, conjugate transposition, inverse and Moore-Penrose pseudo-inverse operators, resp. We use $\I_{K}$ to denote the $K\times K$ identity matrix, and the pinning vector $\ue_k$ denotes the $k$-th column of the identity matrix with context-dependent dimension. Further, $\delta_{k\ell}\triangleq\ue_k^{\tps}\ue_{\ell}$ denotes the Kronecker delta of $k$ and $\ell$. $\Eset[\cdot]$ denotes expectation, the $\Diag(\cdot)$ operator forms an $M\times M$ diagonal matrix from its $M$-dimensional vector argument, and $\uo_M\in\Rset^{M\times 1}$ is the all-zeros vector. The Kronecker, Khatri-Rao (column-wise Kronecker) and tensor outer products (e.g., {\cite{sidiropoulos2017tensor}}) are denoted by $\otimes, \diamond$ and $\circ$, resp. We use $\jmath$ (a dotless $j$) to denote $\sqrt{-1}$; The operators $\Re\{\cdot\}$ and $\Im\{\cdot\}$ denote the real and imaginary parts (resp.) of their complex-valued argument. $\text{rank}(\Q)$ denotes the rank of the matrix $\Q$. The Frobenius and $\ell^2$ norms are denoted by $\norm{\cdot}_{\rm{F}}$ and $\norm{\cdot}_{2}$, resp. Convergence in probability and in distribution are denoted by $\xrightarrow[\quad\;]{p}, \xrightarrow[\quad\;]{d}$, resp., as $T\rightarrow\infty$, where $T$ denotes the sample size. 

The function $\atantwo(y,x)$ returns the principal value of the argument function applied to the complex number $x+\jmath y$. The gradient of a matrix function $\F(x)\in\Cset^{M\times N}$  w.r.t.\ its scalar argument $x\in\Rset$ is denoted by $\nabla_x\F\in\Cset^{M\times N}$. Conversely, the gradient of a scalar function $x(\F)\in\Rset$ w.r.t.\ its matrix argument $\F\in\Cset^{M\times N}$ is denoted by $\nabla_{\text{\boldmath$F$}}x\in\Cset^{M\times N}$. The $\text{vec}(\cdot)$ operator concatenates the columns of an $M \times N$ matrix into an $MN \times 1$ column vector. The $\text{vec}^*\left(\cdot\right)$ operator, defined only for Hermitian matrices, is the invertible transformation which concatenates the columns of its $M\times M$ Hermitian matrix argument into an $(M(M+1)/2)\times 1$ column vector, but takes each element (conjugately) ``duplicated" by conjugate symmetry only once, on its first occurrence. $\text{cum}(w,x,y,z)$ denotes the fourth-order joint cumulant of its four scalar random variable arguments. Finally, as we make use of the somewhat less familiar Kruskal Rank, for convenience, we bring its definition, given as follows.
\begin{definition}
{[\textit{Kruskal Rank}]} \textit{Let $\Q\in\Cset^{I\times D}$. The Kruskal Rank of $\Q$, denoted by $k_{\text{\boldmath $Q$}}$, is $r$ if and only if every $r$ columns are linearly independent, and this fails for at least one set of $r+1$ columns. It follows that $k_{\text{\boldmath $Q$}}\leq{\emph{rank}}(\Q)\leq\emph{min}(I,D)$.}
\end{definition}	
\vspace{-0.4cm}
\section{Problem Formulation}\label{sec:problemformulation}
Consider an array of $M$ AVSs, where each AVS consists of three elements, one pressure and two particle velocity transducers in two perpendicular directions. The configuration of the array, which is not constrained to a particular structure (e.g., uniform linear), is assumed as unknown, which dictates a ``blind" setup in this respect. Further, consider the presence of $D<M-1$ unknown narrowband sources, centered around some common carrier frequency with a wavelength $\lambda$, where we assume that the number of sources $D$ is known. Assuming the received signals are down-converted, Low-Pass Filtered (LPF)\footnote{The bandwidth of the LPF exceeds the bandwidth of the widest source.} and sampled at least at the Nyquist rate, the vector of sampled baseband signals from all $3M$ sensors is given by
\begin{equation}\label{modelequation}
\uy[t]=\bA(\utheta)\us[t]+\uv[t]\triangleq\ux[t]+\uv[t]\in\Cset^{3M\times1},
\end{equation}
for all $t\in\{1,\ldots,T\}$, where
\begin{packed_enum}
	\item $\us[t]\hspace{-0.05cm}\triangleq\hspace{-0.05cm}\left[s_1[t]\,\cdots\,s_D[t]\right]^{\tps}\hspace{-0.05cm}\in\hspace{-0.05cm}\Cset^{D\times1}$ is the vector of sources impinging on the array from unknown azimuth angles $\utheta\triangleq\left[\theta_1\;\cdots\;\theta_D\right]^{\tps}\in[-\pi,\pi)^{D\times1}$, assumed as distinct from one another, i.e, $\forall d\neq\ell:\theta_d\neq\theta_\ell$;
	\item $\bA(\utheta)\triangleq\left[\uba(\theta_1)\,\cdots\,\uba(\theta_D)\right]\in\Cset^{3M\times D}$ is the array manifold matrix, whose columns are the steering vectors
	\begin{equation}\label{steervecdef}
	\uba(\theta_d)\hspace{-0.025cm}\triangleq\hspace{-0.025cm}[\underbrace{\ua^{\tps}(\theta_d)}_{\substack{\text{pressure} \\ \text{sensors}}}\;\underbrace{\cos(\theta_d)\ua^{\tps}(\theta_d)}_{\substack{x\text{-velocity} \\ \text{sensors}}}\;\underbrace{\sin(\theta_d)\ua^{\tps}(\theta_d)}_{\substack{y\text{-velocity} \\ \text{sensors}}}]^{\tps},
	\end{equation}
	in which $\{\ua(\theta_d)\in\Cset^{M\times1}\}_{d=1}^D$
	are the unknown equivalent acoustic pressure sensor array steering vectors \cite{nehorai1994acoustic};
	\item $\uv[t]\in\Cset^{3M\times1}$ is an additive noise vector, spatially and temporally independent, identically distributed (i.i.d.)\ zero-mean circular Complex Normal (CN) \cite{loesch2013cramer} with a covariance matrix $\R_v\triangleq\Eset\left[\uv[t]\uv[t]^{\dagger}\right]=\sigma^2_v\I_{3M}$, where $\sigma^2_v\in\Rset_+$ is assumed as (deterministic) unknown\footnote{Following \cite{hawkes2001acoustic}, we absorb the factor modeling the noise difference between the pressure and velocity channels in the mixing matrix parameters.}; and
	\item $\ux[t]$ is the signal that would have been received in the absence of the additive noise $\uv[t]$, namely with $\sigma_v^2=0$.
\end{packed_enum}

We also assume that the sources may be modeled as temporally i.i.d.\ proper (\hspace{1sp}\cite{neeser1993proper}) zero-mean mutually uncorrelated stochastic processes, statistically independent of the noise $\uv[t]$. We denote the sources' unknown diagonal covariance matrix as $\R_s\triangleq\Eset\left[\us[t]\us[t]^{\dagger}\right]\in\Rset_+^{D\times D}$. Hence,
\begin{equation}\label{covaiancematrixofr}
\begin{aligned}
\hspace{-0.225cm}\R_y&\triangleq\Eset\left[\uy[t]\uy[t]^{\dagger}\right]=\Eset\left[\ux[t]\ux[t]^{\dagger}\right]+\Eset\left[\uv[t]\uv[t]^{\dagger}\right]\\
&\triangleq\hspace{-0.015cm}\R_x\hspace{-0.015cm}+\hspace{-0.015cm}\R_v\hspace{-0.015cm}=\hspace{-0.015cm}\bA(\utheta)\R_s\bA(\utheta)^{\dagger}\hspace{-0.015cm}+\hspace{-0.015cm}\sigma_v^2\I_{3M}\hspace{-0.015cm}\in\hspace{-0.015cm}\Cset^{3M\times3M},
\end{aligned}
\end{equation}
where $\R_x$ is the covariance matrix of the noiseless signal $\ux[t]$.

Thus, the problem at hand can be formulated as follows:
\tcbset{colframe=gray!95!blue,size=small,width=0.49\textwidth,arc=2.1mm,outer arc=1mm}
\begin{tcolorbox}[upperbox=visible,colback=white]
	\textbf{Problem:} {\myfontb\emph{Given the i.i.d.\ measurements $\left\{\uy[t]\right\}_{t=1}^{T}$, without prior knowledge of the parametric structure of $\ua(\theta_d)$, estimate the unknown DOAs $\{\theta_1,\ldots,\theta_D\}$.}}
\end{tcolorbox}
\vspace{-0.5cm}
\section{Phase 1: The CPD-based Blind DOAs Estimate}\label{sec:proposedblindestimate}
Let us begin with a general, bird's-eye view description of our strategy for the proposed solution, which stems from two fundamental observations. First, observe that since the array configuration is assumed unknown, the ``core" (pressure) steering vectors $\{\ua(\theta_d)\}_{d=1}^D$ are unknown, and therefore $\{\uba(\theta_d)\}_{d=1}^D$ are as well. However, the {\myfontb\emph{structure}} of each $\uba(\theta_d)$ as a function of $\ua(\theta_d)$ given in \eqref{steervecdef}, which is determined inherently by the AVS's basic structure, is not only known, but also encapsulates the dependence on the desired DOA $\theta_d$ (via $\cos(\theta_d)$ and $\sin(\theta_d)$) {\myfontb\emph{regardless}} of the particular parametric structure of $\ua(\theta_d)$. Next, notice that while the measured signal \eqref{modelequation}, which is a function of the DOAs $\utheta$, is random for all $t\in\{1,\ldots,T\}$, the covariance matrix $\R_y$, given in \eqref{covaiancematrixofr}, is a deterministic function of $\utheta$. Therefore, we shall work towards writing the explicit dependence of $\R_y$ in $\utheta$ based on the special structure of an AVS steering vector, as prescribed by \eqref{steervecdef}. Since the empirical covariance matrix $\hR_y\triangleq\frac{1}{T}\sum_{t=1}^{T}{\uy[t]\uy[t]^{\dagger}}\in\Cset^{3M\times 3M}$ is a consistent estimate of $\R_y$ (under mild conditions), we may exploit the aforementioned special dependence to derive our blind, consistent DOAs estimate based on $\hR_y$ only.

More specifically, notice first that due to \eqref{steervecdef}, we may write the array manifold matrix $\bA(\utheta)$ as
\begin{equation}\label{AVSarraymatrixdef}
\bA(\utheta)=\begin{bmatrix}
\A(\utheta)\\
\A(\utheta)\Diag\left(\cos(\utheta)\right)\\
\A(\utheta)\Diag\left(\sin(\utheta)\right)\\
\end{bmatrix}\triangleq\C(\utheta)\diamond\A(\utheta),
\end{equation}
where $\A(\utheta)\hspace{-0.05cm}\triangleq\hspace{-0.05cm}\left[\ua(\theta_1)\,\cdots\,\ua(\theta_D)\right]\hspace{-0.05cm}\in\hspace{-0.05cm}\Cset^{M\times D}$, $\cos(\utheta)$ and $\sin(\utheta)$ operate elementwise, and we have defined the auxiliary matrix
\begin{equation}\label{auxiliarymatrixC}
\C(\utheta)\hspace{-0.075cm}\triangleq\hspace{-0.075cm}\begin{bmatrix}
1\hspace{-0.08cm}&\hspace{-0.08cm}\dots\hspace{-0.08cm}&\hspace{-0.08cm}1\\
\cos(\theta_1)\hspace{-0.08cm}&\hspace{-0.08cm}\dots\hspace{-0.08cm}&\hspace{-0.08cm}\cos(\theta_D)\\
\sin(\theta_1)\hspace{-0.08cm}&\hspace{-0.08cm}\dots\hspace{-0.08cm}&\hspace{-0.08cm}\sin(\theta_D)\end{bmatrix}\hspace{-0.08cm}\triangleq\hspace{-0.05cm}\left[\uc(\theta_1)\,\cdots\,\uc(\theta_D)\right]\hspace{-0.07cm}\in\hspace{-0.07cm}\Rset^{3\times D},
\end{equation}
which is completely determined by $\utheta$ only. Since in our framework the ``core" steering vectors' parametric structure $\{\ua(\theta_d)\}_{d=1}^D$ is assumed unknown, we denote for brevity hereafter $\ua(\theta_d):=\ua_d$ for all $d\in\{1,\ldots D\}$, and $\A:=\A(\utheta)$ accordingly. Note, however, that we intentionally keep the notation $\bA(\utheta)$, as its dependence on $\utheta$ via $\C(\utheta)$, regardless of $\{\ua_d\}_{d=1}^D$, is known and given by \eqref{AVSarraymatrixdef}. Note further that although $\bA(\utheta)$ has $3MD$ complex-valued elements, \eqref{AVSarraymatrixdef} implies that it is completely determined only by the unknowns $\{\ua_d,\theta_d\}_{d=1}^D$, namely $2MD+D$ free parameters (/ degrees of freedom). This special, economical structure will be exploited shortly.

Next, let us consider the covariance matrix $\R_x$ of the noiseless signal $\ux[t]$. For this, observe first that both $\A$ and the sources powers, i.e., the diagonal elements of the (diagonal) matrix $\R_s$, are unknown. Thus, without loss of generality\footnote{{\myfontb\emph{Scaling}} of the sources is an inherent ambiguity in such a blind scenario.} (w.l.o.g.), we may assume that $\R_s=\I_{D}$. With this, $\R_x$ reads
\begin{align}\label{covarianceofx}
\hspace{-0.1cm}\R_x&=\bA(\utheta)\R_s\bA(\utheta)^{\dagger}=\big(\C(\utheta)\diamond\A\big)\big(\C(\utheta)\diamond\A\big)^{\dagger}\\
&=\sum_{d=1}^{D}{\left(\uc(\theta_d)\otimes\ua_d\right)\left(\uc(\theta_d)\otimes\ua_d\right)^{\dagger}}\label{mixedproductrule}\\
&=\sum_{d=1}^{D}{\left(\uc(\theta_d)\uc(\theta_d)^{\tps}\right)\otimes\left(\ua_d\ua_d^{\dagger}\right)}\triangleq\sum_{d=1}^{D}{\F(\theta_d)\otimes\A_d},\label{blockmatrixRx}
\end{align}
where we have used the mixed product rule (e.g., \cite{sidiropoulos2017tensor}, Section II) in moving from \eqref{mixedproductrule} to \eqref{blockmatrixRx}, and defined the rank-1 matrices
\begin{equation}\label{Fthetamatrix}
\F(\theta_d)=\begin{bmatrix}
1&\cos(\theta_d)&\sin(\theta_d)\\
\cos(\theta_d)&\cos^2(\theta_d)&\tfrac{1}{2}\sin(2\theta_d)\\
\sin(\theta_d)&\tfrac{1}{2}\sin(2\theta_d)&\sin^2(\theta_d)\end{bmatrix}\in\Rset^{3\times3}
\end{equation}
and $\A_d=\ua_d\ua_d^{\dagger}\in\Cset^{M\times M}$ for all $d\in\{1,\ldots,D\}$. Now, observe that writing \eqref{blockmatrixRx} explicitly, we have
\begin{equation}\label{blcokmatrixcovexplicitly}
\R_x=\sum_{d=1}^{D}{\begin{bmatrix}
F_{11}(\theta_d)\A_d&F_{12}(\theta_d)\A_d&F_{13}(\theta_d)\A_d\\
F_{21}(\theta_d)\A_d&F_{22}(\theta_d)\A_d&F_{23}(\theta_d)\A_d\\
F_{31}(\theta_d)\A_d&F_{32}(\theta_d)\A_d&F_{33}(\theta_d)\A_d\end{bmatrix}},
\end{equation}
which leads to the natural definition of the $4$-mode covariance tensor $\tenR_x\in\Cset^{3\times 3\times M\times M}$, with an $(i,j)$-th dorsal slab \cite{sidiropoulos2017tensor}
\begin{equation}
\label{dorsalslabdef}
\tenR_x(i,j,:,:)\hspace{-0.05cm}\triangleq\hspace{-0.05cm}\sum_{d=1}^{D}{\hspace{-0.025cm}F_{ij}(\theta_d)\A_d}\hspace{-0.05cm}=\hspace{-0.05cm}\sum_{d=1}^{D}{\hspace{-0.025cm}c_i(\theta_d)c_j(\theta_d)\A_d}\hspace{-0.05cm}\in\hspace{-0.05cm}\Cset^{M\times M},
\end{equation}
for all $i,j\in\{1,2,3\}$. Indeed, since $\{\A_d\}$ are rank-1 matrices, one can also write the $(i,j,m,n)$-th element of $\tenR_x$ as,
\begin{equation}\label{covtensorrankD2}
\begin{gathered}
\mathcal{R}_x(i,j,m,n)=\sum_{d=1}^{D}{c_i(\theta_d)c_j(\theta_d)A_{md}A^*_{nd}}\\
\Longrightarrow\;\tenR_x=\sum_{d=1}^{D}{\uc(\theta_d)\circ\uc(\theta_d)\circ\ua_d\circ\ua_d^*},
\end{gathered}
\end{equation}
to conclude that $\tenR_x$ is a $4$-mode tensor of rank $D$.

\textit{Remarks:}
\begin{enumerate}[i.]
	\item Notice that while $M$, the number of AVSs in the array, may be very large, the number of sources $D$ is typically smaller. Hence, since $\tenR_x$ is a rank-$D$ tenor, it admits a CPD of order $D$, and is therefore another compact, economical (and equivalent) representation of $\R_x$.
	\item The dorsal slab $\tenR_x(i,j,:,:)$ is the auto- / cross-covariance matrix between similar / different types of sensors. Thus, $\tenR_x(i,i,:,:)$ are the auto-covariance matrices of all types of sensors (e.g., $\tenR_x(2,2,:,:)$ is the auto-covariance matrix of the $x$-velocity sensors). Similarly, $\tenR_x(i,j,:,:)$, for $i\neq j$, are the cross-covariance matrices between different types of sensors (e.g., $\tenR_x(1,3,:,:)$ is the cross-covariance matrix between the pressure and $y$-velocity sensors).
	\item The tensor $\tenR_x$ depends on (and is determined by) $\utheta$ even without knowledge of the explicit parametric dependence of $\{\ua_d\}_{d=1}^D$ in $\utheta$, i.e., when treating $\A$ as a general unknown complex-valued matrix with $MD$ elements, which are $2MD$ free parameters (/ degrees of freedom).
\end{enumerate}

A well-known appealing property of tensors is the {\myfontb\emph{uniqueness}} of their CPD. This powerful property holds under relatively mild conditions, as formulated in the following theorem for $4$-mode tensors due to Sidiropoulos and Bro \cite{sidiropoulos2000uniqueness}:
\begin{thm}\label{tensoruniqueCPD}
\emph{[\textit{CPD Uniqueness of $4$-mode Tensors}]} Consider the $D$-component $4$-mode tensor
\begin{equation}\label{fourmodetensorthm}
\begin{gathered}
\mathcal{X}(i,j,m,n)=\sum_{d=1}^{D}{Q_{id}U_{jd}W_{md}Z_{nd}}\in\Cset,\\
\forall i\in\{1,\ldots,I\},\;\forall j\in\{1,\ldots,J\},\\
\forall m\in\{1,\ldots,M\}, \forall n\in\{1,\ldots,N\},
\end{gathered}
\end{equation}
with $\Q\in\Cset^{I\times D}, \U\in\Cset^{J\times D}, \W\in\Cset^{M\times D}$ and $\Z\in\Cset^{N\times D}$, and suppose that $\mathcal{X}(i,j,m,n)$ cannot be represented using fewer than $D$ components as in \eqref{fourmodetensorthm}. Then, given $\tenX$, the factor matrices $\Q,\U,\W$ and $\Z$ are unique up to permutation and complex scaling of their columns provided that
\begin{equation}\label{kruskalcond}
k_{\text{\boldmath $Q$}}+k_{\text{\boldmath $U$}}+k_{\text{\boldmath $W$}}+k_{\text{\boldmath $Z$}}\geq 2D+3.
\end{equation}
\end{thm}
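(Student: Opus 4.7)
\textit{Proof strategy.} My plan is to reduce this four-mode uniqueness statement to the classical three-mode Kruskal uniqueness theorem via a ``mode-merging'' argument. Specifically, I would introduce the unfolded three-mode array $\mathcal{Y}\in\Cset^{IJ\times M\times N}$ obtained by concatenating the first two indices of $\mathcal{X}$ into a single composite index $p\triangleq(i-1)J+j$, i.e., $\mathcal{Y}(p,m,n)\triangleq\mathcal{X}(i,j,m,n)$. Direct substitution into \eqref{fourmodetensorthm} shows that $\mathcal{Y}$ admits the three-mode CPD
\begin{equation*}
\mathcal{Y}(p,m,n)=\sum_{d=1}^{D}(\Q\diamond\U)_{pd}\,W_{md}Z_{nd},
\end{equation*}
with factor matrices $\Q\diamond\U\in\Cset^{IJ\times D}$, $\W$ and $\Z$. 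Kruskal's classical three-mode theorem grants essential uniqueness of this decomposition provided that $k_{\Q\diamond\U}+k_{\W}+k_{\Z}\geq 2D+2$, so the problem is reduced to lower-bounding $k_{\Q\diamond\U}$ in terms of $k_{\Q}$ and $k_{\U}$.

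For that I would appeal to the standard Khatri-Rao Kruskal-rank inequality $k_{\Q\diamond\U}\geq\min(k_{\Q}+k_{\U}-1,\,D)$, combined with the freedom to choose which two modes to merge. Relabeling if necessary, assume that $k_{\Q}$ and $k_{\U}$ are the two smallest of the four Kruskal ranks, so that $k_{\W}+k_{\Z}\geq k_{\Q}+k_{\U}$. A short case split then closes the argument: if $k_{\Q}+k_{\U}\leq D+1$, the Khatri-Rao inequality gives $k_{\Q\diamond\U}\geq k_{\Q}+k_{\U}-1$, and summing with $k_{\W}+k_{\Z}$ together with the hypothesis \eqref{kruskalcond} yields $k_{\Q\diamond\U}+k_{\W}+k_{\Z}\geq 2D+2$; whereas if $k_{\Q}+k_{\U}\geq D+2$, then $k_{\Q\diamond\U}=D$ while $k_{\W}+k_{\Z}\geq k_{\Q}+k_{\U}\geq D+2$, so again the three-mode Kruskal condition is satisfied. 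In either case, Kruskal's theorem delivers that the triple $(\Q\diamond\U,\W,\Z)$ is essentially unique, i.e., unique up to a common column permutation and per-column complex scalings.

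The final ``unmerging'' step lifts this back to the original four factors. Each column of $\Q\diamond\U$ equals $\uq_d\otimes\uu_d$, which, reshaped as an $I\times J$ matrix, is rank-one and therefore determines $\uq_d$ and $\uu_d$ individually up to a scalar multiplicative splitting between them, a scalar that is already absorbed into the standard per-column scaling indeterminacy of a CPD. Hence $\Q$ and $\U$ are themselves unique up to the same column permutation and column scalings as $\Q\diamond\U$, and combined with the minimality hypothesis on $D$, this promotes three-way uniqueness into the claimed four-way uniqueness. The main obstacle in my view is the Khatri-Rao Kruskal-rank bound itself: it is the single nontrivial ingredient, and is precisely where the reduction ``buys'' the extra $+1$ on the right-hand side of \eqref{kruskalcond} relative to Kruskal's $2D+2$ threshold; once it is in hand, the case split and the rank-one lifting are routine algebra.
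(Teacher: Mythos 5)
The paper does not prove Theorem~\ref{tensoruniqueCPD} itself but defers to Section~3 of \cite{sidiropoulos2000uniqueness}, and your proposal is precisely a reconstruction of that argument: merge two modes into a Khatri--Rao factor, lower-bound its Kruskal rank via $k_{\text{\boldmath $Q$}\diamond\text{\boldmath $U$}}\geq\min(k_{\text{\boldmath $Q$}}+k_{\text{\boldmath $U$}}-1,D)$ after choosing to merge the two modes with smallest Kruskal ranks, invoke Kruskal's three-way theorem, and unmerge via the rank-one reshaping of each column $\uq_d\otimes\uu_d$. The case split and the lifting are correct (the only unstated minor point is that the minimality of $D$ rules out zero columns, so the Khatri--Rao Kruskal-rank lemma is applicable), so this is essentially the same approach as the cited proof.
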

\noindent The proof of Theorem \ref{tensoruniqueCPD} is given in \cite{sidiropoulos2000uniqueness}, Section 3.

Going back to our problem, in order to avoid the immaterial permutation and scaling ambiguities, which have no effect whatsoever on the DOAs estimation, we assume w.l.o.g.\ that the first element of each steering vector is non-negative, i.e., $A_{1d}\in\Rset_{\geq0}$ for all $d\in\{1,\ldots,D\}$, and that the DOAs are ordered in an ascending order, i.e., $\theta_1<\ldots<\theta_D$. Furthermore, we now assume that $\text{rank}(\A)=D$, and refer to this condition as the {\myfontb\emph{sensor array regularity condition}}. Recall that $\theta_d\neq\theta_\ell$ for all $d\neq\ell\in\{1,\ldots,D\}$ by assumption, hence the sensor array regularity condition is quite mild, and typically holds for any reasonable array configuration / geometry. This regularity condition grants the tensor $\tenR_x$ in \eqref{covtensorrankD2} its uniqueness, as we show in the following theorem.
\begin{thm}\label{covtensoruniqeCPD}
\emph{[\textit{CPD Uniqueness of $\tenR_x$}]} Consider the $4$-mode tensor $\tenR_x$ as defined in \eqref{dorsalslabdef}. Assume $D>1$, and that the sensor array regularity condition holds, i.e., $\emph{rank}(\A)=D$. Then, the CPD \eqref{covtensorrankD2} of the rank-$D$ tensor $\tenR_x$ is unique.
\end{thm}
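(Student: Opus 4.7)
The plan is to apply Theorem~\ref{tensoruniqueCPD} to the CPD~\eqref{covtensorrankD2} with the factor-matrix assignments $\Q=\U=\C(\utheta)\in\Rset^{3\times D}$ and $\W=\A,\ \Z=\A^{*}\in\Cset^{M\times D}$. Uniqueness then reduces to verifying the Kruskal-rank condition $2k_{\text{\boldmath $C$}}+k_{\text{\boldmath $A$}}+k_{\text{\boldmath $A$}^{*}}\geq 2D+3$, which I would establish by bounding the two distinct Kruskal ranks separately.

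The sensor-array regularity condition gives $\text{rank}(\A)=D$, so all $D$ columns of $\A$ are linearly independent, and therefore every subset of them is as well, yielding $k_{\text{\boldmath $A$}}=D$. Entry-wise complex conjugation preserves linear independence (just conjugate the coefficients of any dependence relation), so $k_{\text{\boldmath $A$}^{*}}=D$ too. For $\C(\utheta)$, each column $\uc(\theta_d)=[1,\cos\theta_d,\sin\theta_d]^{\tps}$ has leading entry $1$, so any collinearity $\uc(\theta_i)=\alpha\,\uc(\theta_j)$ would force $\alpha=1$, and hence $(\cos\theta_i,\sin\theta_i)=(\cos\theta_j,\sin\theta_j)$, contradicting the standing assumption of distinct DOAs. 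Thus every pair of columns of $\C(\utheta)$ is linearly independent, so $k_{\text{\boldmath $C$}}\geq 2$, and
$$
2k_{\text{\boldmath $C$}}+k_{\text{\boldmath $A$}}+k_{\text{\boldmath $A$}^{*}}\geq 4+2D > 2D+3,
$$
so~\eqref{kruskalcond} holds (the hypothesis $D>1$ is what makes this pairwise argument meaningful in the first place), and Theorem~\ref{tensoruniqueCPD} supplies uniqueness of the quadruple $(\Q,\U,\W,\Z)$ up to a joint column permutation and per-column complex scalings whose product is $1$.

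What remains is to check that the residual ambiguities permitted by Theorem~\ref{tensoruniqueCPD} are all killed by the paper's conventions. The parametric couplings $\Q=\U$ and $\W=\Z^{*}$ force the four per-column scalars $(\alpha_d,\beta_d,\gamma_d,\delta_d)$ to satisfy $\alpha_d=\beta_d$ and $\gamma_d=\delta_d^{*}$, which combined with $\alpha_d\beta_d\gamma_d\delta_d=1$ yields $\alpha_d^{2}|\gamma_d|^{2}=1$. The leading-$1$ structure of every $\uc(\theta_d)$ then forces $\alpha_d=1$, so $|\gamma_d|=1$; and the normalisations $\R_s=\I_D$ together with $A_{1d}\in\Rset_{\geq 0}$ force $\gamma_d$ to be real and non-negative, hence $\gamma_d=1$. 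The ordering $\theta_1<\cdots<\theta_D$ finally pins down the permutation, and strict uniqueness of the decomposition (and consequently of $\utheta$) follows. I expect this last bookkeeping step --- translating Kruskal-style ``essential'' uniqueness into strict uniqueness under the specific parametric constraints imposed by the AVS model --- to be the main (minor) obstacle; the Kruskal inequality itself is a one-line check once $k_{\text{\boldmath $C$}}$ and $k_{\text{\boldmath $A$}}$ are identified.
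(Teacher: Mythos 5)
Your proof is correct and follows essentially the same route as the paper's: both apply Theorem~\ref{tensoruniqueCPD} with the factor matrices $\C(\utheta),\C(\utheta),\A,\A^*$, use $\text{rank}(\A)=D$ together with the distinctness of the DOAs to verify the Kruskal condition \eqref{kruskalcond}, and then eliminate the permutation and scaling ambiguities via the all-ones first row of $\C(\utheta)$, the convention $A_{1d}\in\Rset_{\geq0}$, and the ordering $\theta_1<\cdots<\theta_D$. The only immaterial differences are that you bound $k_{\text{\boldmath $C$}}\geq 2$ by a pairwise argument where the paper computes $k_{\text{\boldmath $C$}}=\min\{3,D\}$ exactly, and that you spell out the scaling bookkeeping more explicitly.
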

\begin{proof}
See Appendix \ref{AppA}.
\end{proof}
At this point, based on the uniqueness of the CPD \eqref{covtensorrankD2} of the covariance tensor $\tenR_x$, we are ready to present the blind CPD-based DOAs estimate, whose intuitive definition comes naturally from Theorem \ref{covtensoruniqeCPD}. Given the measurements $\{\uy[t]\}_{t=1}^T$,
\begin{enumerate}
\item Compute the empirical covariance matrix $\hR_y$;
\item Estimate the noise variance $\sigma_v^2$ via the MLE for Gaussian signals \cite{wax1985detection} or the improved (less biased) estimate \cite{kritchman2009non}. Denote it as $\widehat{\sigma}_v^2$ (note that this is a consistent estimate);
\item Define the covariance matrix estimate of $\ux[t]$, and the corresponding block partitioning (according to \eqref{blcokmatrixcovexplicitly})
\begin{equation}\label{covmatofxest}
\hR_x\triangleq\hR_y-\widehat{\sigma}_v^2\I_{3M}\triangleq{\begin{bmatrix}
\hR_x^{(1,1)}&\hR_x^{(1,2)}&\hR_x^{(1,3)}\\
\hR_x^{(2,1)}&\hR_x^{(2,2)}&\hR_x^{(2,3)}\\
\hR_x^{(3,1)}&\hR_x^{(3,2)}&\hR_x^{(3,3)}\end{bmatrix}},
\end{equation}
and using \eqref{covmatofxest}, construct the estimated $4$-mode covariance tensor $\htenR_x\in\Cset^{3\times 3\times M\times M}$ of $\ux[t]$,
\begin{equation}\label{covtenofxest}
\htenR_x(i,j,:,:)\triangleq\hR_x^{(i,j)}\in\Cset^{M\times M},\; \forall i,j\in\{1,2,3\};
\end{equation}
\item Given the statistic tensor $\htenR_x$, jointly estimate the DOAs vector $\utheta$ and the ``core" steering vectors matrix $\A$ via
\tcbset{colframe=gray!95!blue,size=small,width=0.45\textwidth,arc=2.1mm,outer arc=1mm}
\begin{tcolorbox}[upperbox=visible,colback=white]
\begin{equation}\label{CPDDOAestimate}
\left(\hutheta_{\cpd},\hA_{\cpd}\right)\triangleq\underset{\substack{\text{{\boldmath $\theta$}$\in[-\pi,\pi)^{D\times1}$} \\ \text{{\boldmath $\A$}$\in\Cset^{M\times D}$}}}{\argmin} \norm{\tenR\left(\utheta,\A\right)-\htenR_x}^2_{\rm{F}},
\end{equation}
\end{tcolorbox}
where $\tenR\left(\tutheta,\tA\right)$ is the parametric tensor function
\begin{equation}\label{parametrictensorfunction}
\begin{gathered}
\tenR:[-\pi,\pi)^{D\times1}\times\Cset^{M\times D}\rightarrow\Cset^{M\times M\times D\times D},\\
\tenR\left(\tutheta,\tA\right)\triangleq\sum_{d=1}^{D}{\uc(\ttheta_d)\circ\uc(\ttheta_d)\circ\tua_d\circ\tua_d^*},
\end{gathered}
\end{equation}
with $\tutheta\in[-\pi,\pi)^{D\times1}$ and $\tA\triangleq\left[\tua_1\;\cdots\;\tua_D\right]\in\Cset^{M\times D}$, and we use $\widetilde{\;\,}$ to emphasize that, in general, the arguments may be different from the {\myfontb\emph{true}} unknown estimands $\utheta, \A$. 
\end{enumerate}

Obviously, due to Theorem \ref{covtensoruniqeCPD}, given the {\myfontb\emph{true}} covariance tensor $\tenR_x$, which admits the {\myfontb\emph{exact}} CPD \eqref{covtensorrankD2}, the sources' DOAs are readily extracted from the vectors $\{\uc(\theta_d)\}_{d=1}^D$ via
\begin{equation}\label{DOAexaccomputation}
\atantwo\left(c_3(\theta_d),c_2(\theta_d)\right)=\theta_d, \; \forall d\in\{1,\ldots,D\}.
\end{equation}
Therefore, intuitively, if $\hR_y$ and $\widehat{\sigma}_v^2$ are ``good" estimates, than so does $\htenR_x$, and it makes sense to define the estimate \eqref{CPDDOAestimate}: the best approximate CPD of $\htenR_x$ in the Least Squares (LS) sense yields approximate versions of $\{\uc(\theta_d)\}_{d=1}^D$, from which the DOAs' estimates arise. Fortunately, this rationale may be rigorously justified, as we prove in the following theorem.
\begin{thm}\label{consistencyofCPDestimate}
\emph{[\textit{Consistency of the CPD-based estimates $\hutheta_{\emph{\cpd}}$ and $\hA_{\emph{\cpd}}$}]} Let $\hR_y$ and $\widehat{\sigma}_v^2$ be consistent estimates of $\R_y$ and $\sigma_v^2$, resp. Further, assume that all the elements of $\A$ are finite, such that 
\begin{equation}\label{boundingspherecondition}
\exists\rho\in\Rset_+:\norm{\A}_{\rm{F}}\leq\rho,
\end{equation}
and that the DOAs vector $\utheta$ belong to a compact set, such that
\begin{equation}\label{DOAscompactsetcondition}
\exists\epsilon\in\Rset_+:\utheta\in[-\pi,\pi-\epsilon]^{D\times 1}.
\end{equation}
Then,\hspace{-0.05cm} the estimates $\hutheta_{\emph{\cpd}}$ and $\hA_{\emph{\cpd}}$ defined in \eqref{CPDDOAestimate} are consistent, i.e., with any fixed SNR level, for a sample size $T\rightarrow\infty$,
\begin{equation}\label{consistencyofCPDestpconvergence}
\left(\hutheta_{\emph{\cpd}},\hA_{\emph{\cpd}}\right)\xrightarrow[\quad\;]{p}\left(\utheta,\A\right).
\end{equation}
\end{thm}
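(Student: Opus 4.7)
The plan is to recognize \eqref{CPDDOAestimate} as an $M$-estimation problem and apply the classical argmin-consistency theorem, for which I need to verify three ingredients: compactness of the parameter set, a uniform law of large numbers, and a unique minimizer at the truth. Concretely, I would set
\[
Q_T(\tutheta,\tA)\triangleq\norm{\tenR(\tutheta,\tA)-\htenR_x}_{\rm F}^{2},\quad
Q(\tutheta,\tA)\triangleq\norm{\tenR(\tutheta,\tA)-\tenR_x}_{\rm F}^{2},
\]
and work on $\Theta\triangleq[-\pi,\pi-\epsilon]^{D}\times\{\tA\in\Cset^{M\times D}:\norm{\tA}_{\rm F}\leq\rho\}$, which is compact by \eqref{boundingspherecondition}--\eqref{DOAscompactsetcondition}. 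If I can show that $\sup_{\Theta}|Q_T-Q|\xrightarrow[\;]{p}0$ and that $(\utheta,\A)$ is the unique minimizer of $Q$ on $\Theta$, then \eqref{consistencyofCPDestpconvergence} follows from standard arguments (e.g., van der Vaart, \emph{Asymptotic Statistics}, Thm.~5.7).

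The first easy step is to note that $\htenR_x\xrightarrow[\;]{p}\tenR_x$: by hypothesis $\hR_y\xrightarrow[\;]{p}\R_y$ and $\widehat{\sigma}_v^{2}\xrightarrow[\;]{p}\sigma_v^{2}$, so the continuous mapping theorem applied to \eqref{covmatofxest}--\eqref{covtenofxest} gives the claim. The uniform convergence step would then proceed by expanding
\[
Q_T(\tutheta,\tA)-Q(\tutheta,\tA)=-2\,\Re\bigl\langle\tenR(\tutheta,\tA),\,\htenR_x-\tenR_x\bigr\rangle+\norm{\htenR_x}_{\rm F}^{2}-\norm{\tenR_x}_{\rm F}^{2},
\]
observing that $\tenR(\cdot,\cdot)$ in \eqref{parametrictensorfunction} is a polynomial in $\cos\ttheta_d,\sin\ttheta_d$ and the entries of $\tA$, hence continuous and uniformly bounded on the compact $\Theta$, and then applying Cauchy--Schwarz to obtain
\[
\sup_{\Theta}|Q_T-Q|\,\leq\,C_1\norm{\htenR_x-\tenR_x}_{\rm F}+\bigl|\,\norm{\htenR_x}_{\rm F}^{2}-\norm{\tenR_x}_{\rm F}^{2}\bigr|\,\xrightarrow[\;]{p}\,0.
\]

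The main obstacle, I expect, is identifiability: verifying that $(\utheta,\A)$ is the \emph{unique} global minimizer of $Q$ on $\Theta$. Clearly $Q(\utheta,\A)=0$, so any competitor $(\tutheta^{\star},\tA^{\star})$ with $Q(\tutheta^{\star},\tA^{\star})=0$ supplies another rank-$D$ CPD of $\tenR_x$, and Theorem~\ref{covtensoruniqeCPD} then forces it to agree with $(\utheta,\A)$ up to joint permutation of the $D$ components and complex column rescalings whose four-factor product equals $1$. I would dispatch the trivial ambiguities one by one: the structural fact that the first entry of every $\uc(\ttheta_d^{\star})$ equals $1$ kills the scalings on the first two modes; the coupling $\ua_d^{\star}\circ(\ua_d^{\star})^{\ast}$ on the last two modes reduces the remaining scaling to a positive real, which is then pinned down by the normalization $\R_s=\I_D$ together with the convention $A_{1d}^{\star}\in\Rset_{\geq 0}$; and the ordering $\theta_1<\cdots<\theta_D$ removes the permutation ambiguity. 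With angles confined to $[-\pi,\pi-\epsilon]$, the atan2-map of \eqref{DOAexaccomputation} confirms that distinct columns of $\C(\tutheta^{\star})$ correspond to distinct DOAs, so no spurious minimizer is missed.

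A small subtlety I would double-check is that the argument genuinely uses the sensor-array regularity condition $\mathrm{rank}(\A)=D$ rather than merely $D<M$: Theorem~\ref{covtensoruniqeCPD}, invoked above, relies on a Kruskal-rank computation that needs $k_{\text{\boldmath $A$}}=D$, and it is precisely this ingredient that turns ``CPD unique up to trivial ambiguities'' into ``$(\utheta,\A)$ is the unique minimizer on $\Theta$''. Once identifiability and uniform convergence are both in hand, the argmin-continuity lemma delivers $(\hutheta_{\cpd},\hA_{\cpd})\xrightarrow[\;]{p}(\utheta,\A)$, closing the proposed proof.
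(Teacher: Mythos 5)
Your proposal is correct and shares the paper's skeleton: both proofs cast \eqref{CPDDOAestimate} as an extremum-estimation problem on the compact set induced by \eqref{boundingspherecondition}--\eqref{DOAscompactsetcondition}, establish identifiability of the population minimizer via Theorem \ref{covtensoruniqeCPD}, and then invoke an argmin-consistency theorem (the paper uses Theorem 2.1 of Newey--McFadden where you use van der Vaart's Theorem 5.7; these are interchangeable here). Where you genuinely diverge is the uniform-convergence step. The paper proves condition (iv) by verifying \emph{stochastic equicontinuity} (Lemma 2.8 of Newey--McFadden), which requires constructing an explicit dominating sequence $\widehat{\Delta}_T(\epsilon,\eta)$, open neighborhoods $U_{\text{\boldmath$\vartheta$}}$, and a chain of triangle/Cauchy--Schwarz bounds. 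You instead exploit the specific structure of the objective: since the randomness enters $Q_T-Q$ only through $\htenR_x$ (not through the parameter), the difference is affine in $\htenR_x-\tenR_x$, and a single Cauchy--Schwarz bound with $C_1=\sup_{\Theta}\norm{\tenR(\tutheta,\tA)}_{\rm F}<\infty$ (finite by continuity on the compact $\Theta$) gives the uniform bound directly, after which the continuous mapping theorem finishes the job. This is more elementary and arguably cleaner for this particular quadratic objective, at the cost of being less generic than the equicontinuity route. Your identifiability discussion is also somewhat more explicit than the paper's (which simply declares condition (i) an immediate consequence of Theorem \ref{covtensoruniqeCPD}); one small imprecision there is that the coupling $\ua_d^{\star}\circ(\ua_d^{\star})^{*}$ on the last two modes constrains the residual column scaling to be of \emph{unit modulus} rather than positive real, and it is then the convention $A_{1d}\in\Rset_{\geq0}$ that collapses it to $1$ --- but this does not affect the conclusion.
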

\begin{proof}\label{proofCPDestconsistency}
See Appendix \ref{AppB}.
\end{proof}
Notice that the conditions Theorem \ref{consistencyofCPDestimate} requires are quite mild, and are fulfilled in practice for the most part. For distributions with finite fourth order moments, $\hR_y$ and $\widehat{\sigma}_v^2$ are consistent estimates (where the noise variance is estimated as in \cite{wax1985detection} or \cite{kritchman2009non}). Furthermore, in practice, the steering vectors $\{\ua_d\}_{d=1}^D$ are always finite. Lastly, condition \eqref{DOAscompactsetcondition} is required for technical considerations in the proof presented in Appendix \ref{AppB}, but is meaningless from a practical point of view for a sufficiently small $\epsilon$. Therefore, we conclude that under these mild conditions, and without knowledge of the explicit dependence of $\A$ in $\utheta$, \eqref{CPDDOAestimate} are consistent. In particular, we derived $\hutheta_{\cpd}$, consistent blind DOAs estimates, as desired.

We note in passing that our approach also yield, as a by product, the nuisance parameters' estimate $\hA_{\cpd}$, allowing for consistent separation of the latent sources $\us[t]$, by multiplying $\hA^+_{\cpd}$, the pseudo-inverse of $\hA_{\cpd}$, to the left of $\uy[t]$.

Having provided the blind DOAs estimates \eqref{CPDDOAestimate} and the analytical guarantees for their consistency, we now turn to present an iterative algorithm for their actual computation.
\vspace{-0.2cm}
\section{Computation of the CPD-based Estimates via the Modified AC-DC Algorithm}\label{sec:ModACDCalg}
Our goal in this section is to find a computationally feasible algorithm in order to obtain the proposed estimates $\hutheta_{\cpd}$ and $\hA_{\cpd}$, given the estimated covariance tensor $\htenR_x$. The roadmap towards this goal is the following. First, we show that the optimization problem \eqref{CPDDOAestimate} is in fact equivalent to an AJD problem, with underlying diagonal matrices which admit a particular parametric structure. Then, we resort to the AJD AC-DC algorithm \cite{yeredor2002non}, which due to its alternating mode of operation, allows for a local convenient modification, tailored \textit{ad-hoc} to the particular parametric structure of the aforementioned diagonal matrices. By this, we obtain an iterative solution algorithm for the optimization problem \eqref{CPDDOAestimate}.

For the first step, let us define the LS cost function
\begin{equation}\label{costfunctionLS}
C_{\LS}\left(\utheta,\A\right)\triangleq\norm{\tenR\left(\utheta,\A\right)-\htenR_x}^2_{\rm{F}}\in\Rset_+.
\end{equation}
Now, since we may write 
\begin{equation}\label{diagonalizationformulation}
\sum_{d=1}^{D}{F_{ij}(\theta_d)\A_d}=\A\,\Diag\left(F_{ij}(\utheta)\right)\A^{\dagger}\triangleq\A\D_{ij}(\utheta)\A^{\dagger},
\end{equation}
where $F_{ij}(\utheta)$ is elementwise (e.g., $F_{23}(\utheta)=\tfrac{1}{2}\sin(2\utheta)$) and $\D_{ij}(\utheta)\in\Rset^{D\times D}$, using \eqref{blcokmatrixcovexplicitly}--\eqref{dorsalslabdef} and \eqref{covtenofxest}, observe that 
\begin{align}\label{costfunctionLSAJD}
C_{\LS}\left(\utheta,\A\right)&=\sum_{i,j=1}^{3}{\norm{\sum_{d=1}^{D}{F_{ij}(\theta_d)\A_d}-\htenR_x(:,:,i,j)}^2_{\rm{F}}}\\
&=\sum_{i,j=1}^{3}{\norm{\A\D_{ij}(\utheta)\A^{\dagger}-\hR_x^{(i,j)}}^2_{\rm{F}}}.\label{costfunctionLSAJD2}
\end{align}
Therefore, the optimization problem in \eqref{CPDDOAestimate} is equivalent to AJD in the LS sense. More specifically, since $\A$ is not restricted to a particular structure (e.g., orthogonal matrix), \eqref{costfunctionLSAJD2} accounts for a {\myfontb\emph{non-orthogonal}} AJD problem.

\subsection{Review of the AC-DC Algorithm}\label{subsec:reviewACDC}
One viable solution approach for such a non-orthogonal AJD problem, is using the iterative AC-DC algorithm, proposed by Yeredor \cite{yeredor2002non}. In a nutshell, given a set of $K$ ``target-matrices" $\{\Q_k\in\Cset^{N\times N}\}_{k=1}^K$, the algorithm seeks a ``diagonalizing matrix" $\B\in\Cset^{N\times L}$ and $K$ associated diagonal matrices $\{\mLambda_k\in\Cset^{L\times L}\}_{k=1}^K$, such that
\begin{equation}\label{ACDCcostfunction}
C_{\text{\tiny AC-DC}}\left(\{\mLambda_k\}_{k=1}^K,\B\right)\triangleq\sum_{k=1}^{K}{w_k\norm{\B\mLambda_k\B^{\dagger}-\Q_k}_{\rm{F}}^2}
\end{equation}
is minimized, where $\{w_k\in\Rset_+\}_{k=1}^K$ are some positive weights. The algorithm alternates between the two following minimization schemes:
\begin{itemize}
\item  The AC (``alternating columns") phase minimizes $C_{\text{\tiny AC-DC}}$ w.r.t.\ a single column of $\B$ while keeping its other columns, as well as $\{\mLambda_k\}_{k=1}^K$, fixed. This phase is sequentially repeated, for all columns of $\B$, for a prespecified number of ``sweeps".
\item The DC (``diagonal centers") phase minimizes $C_{\text{\tiny AC-DC}}$ w.r.t.\ the diagonal matrices $\{\mLambda_k\}_{k=1}^K$ while keeping $\B$ fixed.
\end{itemize}

It is readily seen that \eqref{costfunctionLSAJD2} admits the same formulation as \eqref{ACDCcostfunction}, with the simple mapping\footnote{The order of the six distinct pairs $(i,j)$ mapped to the index $k$ is insignificant, as long as all six distinct options are mapped.} 
\begin{equation}\label{AJDisomorphism}
\begin{tabular}{ c c c }
\underline{AC-DC} &  & \underline{DOAs via CPD}\vspace{0.15cm} \\
$(N,L,K)$ & $\rightarrow$ & $(M,D,6)$ \\
$k\in\{1,\ldots,6\}$ & $\rightarrow$ & $(i,j)\in\mathcal{I}_3\triangleq\{(\ell,p):\ell\leq p\leq3\}_{\ell=1}^3$ \\
$w_k$ & $\rightarrow$ & $2-\delta_{ij}$ \\
$\B$ & $\rightarrow$ & $\A$ \\
$\left\{\mLambda_k\right\}$ & $\rightarrow$ & $\left\{\D_{ij}(\utheta)\right\}$ \\
$\left\{\Q_k\right\}$ & $\rightarrow$ & $\{\hR_x^{(i,j)}\}$
\end{tabular}
\end{equation}
where we have used $\hR_x^{(i,j)}=\hR_x^{(j,i)}$ for all $i,j\in\{1,2,3\}$. Yet, although it is possible to apply the AC-DC algorithm in its original form in order to solve our specific CPD optimization problem, a significant enhancement can be achieved by exploiting a subtle difference in these two (almost) identical problem, as follows. While the AC-DC attempts to minimize \eqref{ACDCcostfunction} for some general diagonal matrices $\{\mLambda_k\}$, namely with $KL$ free parameters (corresponding to $6D$ parameters according to \eqref{AJDisomorphism}) in $\Cset$, in our problem, the diagonal matrices $\{\D_{ij}(\utheta)\}$ are parametrized by the DOAs vector $\utheta$, namely by only $D$ free parameters in $[-\pi,\pi)$. Therefore, it would be desirable if certain modifications in the AC-DC algorithm could be made, such that the optimization w.r.t.\ the diagonal matrices would actually be only w.r.t.\ $\utheta$. The reason for this is twofold. Firstly, the optimization would be for less parameters---$D$ rather than $6D$. Secondly, the optimization for each $\theta_d$ would be confined to the interval $[-\pi,\pi)$, rather than searching on the whole complex plane. Consequently, not only the computational cost would be reduced, but the variance in the resulting diagonal elements' estimates would be reduced as well, due to their known parametric structure \eqref{diagonalizationformulation}.

Fortunately, the AC-DC algorithm operates iteratively in an alternating manner between two phases: While in the AC phase the diagonalizing matrix $\B$ is optimized with the diagonal matrices $\{\mLambda_k\}$ kept fixed, the optimization in the DC phase is w.r.t.\ $\{\mLambda_k\}$ {\myfontb\emph{only}} with $\B$ kept fixed. Therefore, we shall now redesign the DC phase so as to adjust it to the specifics of our problem, in order to enjoy the aforementioned advantages.
\vspace{-0.2cm}
\subsection{The Modified DC Phase}\label{subsec:modDCphase}
As mentioned above, in terms of our problem, during this phase the current estimate of $\A$ is held fixed. Therefore, in order to emphasize that this is not the final estimate $\hA_{\cpd}$, we denote this (intermediate) estimate as $\hA$ for brevity. Hence, the problem under consideration in this phase is as follows
\begin{equation}\label{costAJDwrttheta}
\hutheta=\underset{\text{{\boldmath $\theta$}$\in[-\pi,\pi)^{D\times1}$}}{\argmin}C_{\LS}\left(\utheta,\hA\right),
\end{equation}
where $\hutheta$ denotes the (intermediate) estimate of $\utheta$ in the modified DC phase. Since optimizing \eqref{costAJDwrttheta} is a non-convex $D$-dimensional optimization problem, we take a similar approach as in the original AC phase, and minimize $C_{\LS}$ w.r.t.\ $\theta_d$ while keeping all other DOAs $\{\theta_\ell\}_{\ell\neq d}$ fixed. The complete modified DC phase is then comprised of a predefined, fixed number of sweeps over all the DOAs $\{\theta_d\}_{d=1}^D$. Thus, the relaxed optimization problem at hand is now
\begin{align}\label{costAJDwrtsingletheta}
\htheta_d&=\underset{\theta_d\in[-\pi,\pi)}{\argmin}C_{\LS}\left(\left[\htheta_1\,\cdots\,\theta_d\,\cdots\,\htheta_D\right]^{\tps},\hA\right)\\
&\triangleq\underset{\theta_d\in[-\pi,\pi)}{\argmin}\widetilde{C}_{\LS}(\theta_d),\label{costAJDwrtsingletheta2}
\end{align}
namely a trigonometrical scalar function of a real-valued scalar argument. As we show in the reminder of this subsection, \eqref{costAJDwrtsingletheta2} may be solved efficiently, thus leading to our desired goal in deriving a modified DC phase, tailored specifically to our primary {\myfontb\emph{parametric}} CPD optimization problem.

Starting our derivation, in Appendix \ref{AppC} we show that differentiating $\widetilde{C}_{\LS}$ w.r.t.\ $\theta_d$ yields after algebraic simplifications
\begin{equation}\label{DOALSequation}
\frac{\partial \widetilde{C}_{\LS}}{\partial\theta_d}=\alpha\cos(\theta_d)-\beta\sin(\theta_d)+\gamma\cos(2\theta_d)-\delta\sin(2\theta_d),
\end{equation}
where
\begin{align}
\alpha&=4\left(\sum_{\substack{k=1 \\ k\neq d}}^{D}{\left|\hua_d^{\dagger}\hua_k\right|\sin(\htheta_k)}-\hua_d^{\dagger}\hR_x^{(1,3)}\hua_d\hspace{-0.075cm}\right),\label{constantLSequation1}\\
\beta&=4\left(\sum_{\substack{k=1 \\ k\neq d}}^{D}{\left|\hua_d^{\dagger}\hua_k\right|\cos(\htheta_k)}-\hua_d^{\dagger}\hR_x^{(1,2)}\hua_d\hspace{-0.075cm}\right),\label{constantLSequation2}\\
\gamma&=2\left(\sum_{\substack{k=1 \\ k\neq d}}^{D}{\left|\hua_d^{\dagger}\hua_k\right|\sin(2\htheta_k)}-2\hua_d^{\dagger}\hR_x^{(2,3)}\hua_d\hspace{-0.075cm}\right),\label{constantLSequation3}\\
\delta&=2\left(\sum_{\substack{k=1 \\ k\neq d}}^{D}{\left|\hua_d^{\dagger}\hua_k\right|\cos(2\htheta_k)}-\hua_d^{\dagger}\left(\hR_x^{(2,2)}-\hR_x^{(3,3)}\right)\hua_d\hspace{-0.075cm}\right)\label{constantLSequation4}
\end{align}
are constants w.r.t.\ the estimand $\theta_d$. Note that \eqref{constantLSequation1}--\eqref{constantLSequation4} depend on the index $d$, which is not reflected in their notations for the sake of brevity. Granted, the global minimizer of $\widetilde{C}_{\LS}$ is, in particular, a stationary point of $\widetilde{C}_{\LS}$. Hence, it is necessarily a solution of the equation $\tfrac{\partial \widetilde{C}_{\LS}}{\partial\theta_d}=0$, namely
\begin{equation}\label{thetaLSequations}
\alpha\cos(\theta_d)-\beta\sin(\theta_d)+\gamma\cos(2\theta_d)-\delta\sin(2\theta_d)=0.
\end{equation}
Introducing the transformation $\tau\triangleq\tan\left(\frac{\theta_d}{2}\right)$
enables us to rewrite \eqref{thetaLSequations} in terms of the variable $\tau$ as (see Appendix \ref{AppC})
\begin{equation}\label{transformedDOALSequation}
(3\gamma+\alpha)\tau^4+2\beta\tau^3+2\gamma\tau^2+(4\delta+2\beta)\tau-(\alpha+\gamma)=0.
\end{equation}
Evidently, \eqref{transformedDOALSequation} is a $4$-th order polynomial equation in $\tau$, which has at most four real-valued solutions, and may be solved efficiently using various methods (e.g., \cite{strobach2010fast}). Thus, the global minimizer \eqref{costAJDwrtsingletheta2}, denoted as $\theta_*$, is the solution that minimizes $\widetilde{C}_{\LS}$ out of these (maximum four) solutions, and is the updated estimate of $\theta_d$. We stress that this approach does not guarantee that the output of the modified AC-DC algorithm would be the global maximizer \eqref{CPDDOAestimate}. However, it does guarantee that $C_{\LS}$ is non-increasing w.r.t.\ the iterations, exactly as for the original AC-DC algorithm, thus retaining its original weak convergence property (see \cite{yeredor2002non}, Section IV). Summarizing the above, the modified DC phase is given in Algorithm \ref{Algorithm1}.
\begin{algorithm}[t]
\KwIn{$\hA,\hutheta,N_{\text{s}}$ (current estimates\hspace{0.06cm}+\hspace{0.06cm}number of sweeps)}
\KwOut{$\hutheta_{\text{\tiny MOD-DC}}$ (updated estimate of $\utheta$)}
\nl \ForAll{$N_{\text{\emph{s}}}$ \emph{sweeps}}{\nl \For{$d=1,\ldots,D$}{\nl Compute $\alpha,\beta,\gamma,\delta$ via \eqref{constantLSequation1}--\eqref{constantLSequation4}, resp.;\\
\nl Solve \eqref{transformedDOALSequation}, and translate only the real-valued solutions according to $\theta_d=2\tan^{-1}(\tau)$;\\
\nl Evaluate $\widetilde{C}_{\LS}(\theta_d)$ for each of the solutions. Denote the global minimizer as $\theta_*$;\\
\nl Update $\htheta_d=\theta_*$;}}
\nl return $\hutheta_{\text{\tiny MOD-DC}}=\hutheta$.
\caption{{\bf Modified DC Phase} \label{Algorithm1}}
\end{algorithm}
\setlength{\textfloatsep}{2pt}

Note that due to the required matrix multiplications, evaluating $\widetilde{C}_{\LS}(\theta_d)$ at four points amounts to $\mathcal{O}\left(M^3\right)$ operations. Since the overall computational load per iteration\footnote{Considered as a single run of the AC/DC phase, with a single full sweep.} of the AC-DC algorithm, in terms of the left hand side of \eqref{AJDisomorphism}, is $\mathcal{O}\left(KN^3\right)$ (see \cite{yeredor2002non}, Section IV), it is $\mathcal{O}\left(6M^3\right)$ in terms of the parameters of our problem, according to the right hand side of \eqref{AJDisomorphism}. Therefore, we conclude that the overall computational load per iteration of the AC-DC algorithm with the modified DC phase remain $\mathcal{O}\left(M^3\right)$, namely unchanged.
\vspace{-0.3cm}
\subsection{Initialization via Exact Joint Diagonalization}\label{subsec:initModACDC}
Naturally, whenever an iterative algorithm is proposed the issue of initialization must be addressed. To this end, recall first that \eqref{CPDDOAestimate} is a solution to the equivalent AJD problem \eqref{costfunctionLSAJD2}. Therefore, we propose to use Yeredor's Exact Joint Diagonalization (EJD) for AJD method \cite{yeredor2005using}, given in our case by the following steps:
\begin{enumerate}
	\item Construct vectors $\um_x^{(i,j)}\triangleq\text{svec}\left(\hR_x^{(i,j)}\right), \forall(i,j)\in\mathcal{I}_3$;
	\item Construct the matrix $\M_x\triangleq\sum_{(i,j)\in\mathcal{I}_3}{\um_x^{(i,j)}\left(\um_x^{(i,j)}\right)^{\tps}}$;
	\item Find the two largest eigenvalues and associated eigenvectors $\up_1$ and $\up_2$ of $\M_x$;
	\item Construct the matrices $\P_k\triangleq\text{unsvec}\left(\up_k\right)$, $k=1,2$; and
	\item Compute the eigenvalue decomposition of $\P_1\P_2^{-1}$, and denote $\hA_{\EJD}\in\Cset^{M\times D}$ as the matrix with the $D$ eigenvectors corresponding to the $D$ largest eigenvalues.
\end{enumerate}
For the definitions of the $\text{svec}(\cdot)$ and $\text{unsvec}(\cdot)$ operators, see \cite{yeredor2005using}, Subsection II-B. Having obtained the EJD-based solution $\hA_{\EJD}$, we further carry out the following additional steps:
\begin{enumerate}
	\setcounter{enumi}{5}
	\item $\hA_{\EJD}\leftarrow \hA_{\EJD}\cdot\Diag\left(\left[e^{-\jmath\phi_{11}^{\EJD}}\cdots e^{-\jmath\phi_{1D}^{\EJD}}\right]\right)$, where $e^{\jmath\phi_{1d}^{\EJD}}\triangleq \widehat{A}_{\EJD_{1d}}/|\widehat{A}_{\EJD_{1d}}|$ for all $d\in\{1,\ldots,D\}$, such that the first element of each steering vector is real-valued, as required;
	\item Compute the pair of estimated diagonal matrices
	\begin{equation*}
	j\in\{2,3\}:\widehat{\D}_{1j}\left(\utheta\right)\triangleq\Re\{\hA_{\EJD}^+\hR_x^{(1,j)}\hA_{\EJD}\}\in\Rset^{D\times D};
	\end{equation*}
	\item Compute $\hutheta_{\EJD}$, the initial EJD-based DOA estimates via \eqref{DOAexaccomputation}, where the estimates of $c_2(\theta_d)$, $c_3(\theta_d)$ are the associated diagonal elements of $\widehat{\D}_{12}\left(\utheta\right)$, $\widehat{\D}_{13}\left(\utheta\right)$, resp.
\end{enumerate}

Note that although this non-iterative solution is ``merely" an initial solution to the proposed modified AC-DC algorithm, it is instrumental for a successful operation in practice. Indeed, while the estimate \eqref{CPDDOAestimate} is consistent, computing it boils down to finding the global minimizer of $C_{\text{LS}}$, which, in general, is not a trivial task. However, and as we show via simulations in Section \ref{sec:simulationresults}, when the proposed initialization is used, the iterative algorithm yields the desired estimates.

As an intermediate summary, by establishing Algorithm \ref{Algorithm1}, we have thus presented a consistent blind DOAs estimate \eqref{CPDDOAestimate}, along with an iterative solution algorithm---the AC-DC algorithm with the modified DC phase---equipped with a non-iterative ``educated" initial solution, tailored \textit{ad-hod} to our specific blind AVS DOA estimation problem.

We now turn to the particular case of Gaussian signals, in which significant performance enhancement can be attained by further refining the CPD-based estimates $\hutheta_{{\cpd}}, \hA_{{\cpd}}$ from \eqref{CPDDOAestimate}. Moreover, and quite interestingly, we show that the results obtained within the Gaussian signal model framework are valid for other signal models as well, enabling this performance enhancement for a wider class of signals' distributions.
\vspace{-0.1cm}
\section{Optimal Estimation for Gaussian Signals}\label{subsec:CRBound}
In this section only, we further assume that $\us[t]$ is circular CN. As a consequence, it follows that
\begin{equation}\label{CN_samples}
\uy[t]\sim \mathcal{CN}\left(\uo_{3M},\R_y\right), \forall t\in\{1,\ldots,T\}.
\end{equation}
Since the model is now fully specified, it is first instructive to study the CRLB on the MSE of any unbiased estimate in joint estimation of all the unknown deterministic parameters, namely $\A,\utheta$ and $\sigma_v^2$ (recall that $\R_s=\I_D$ by assumption).
\vspace{-0.2cm}
\subsection{Cram\'er-Rao Lower Bound in Blind DOA Estimation}\label{subsec:CRLB}
For brevity, we define the vector of all real-valued unknowns
\begin{equation}\label{defofvarphi}
\uvarphi\triangleq\left[\text{vec}(\Re\{\A\})^{\tps}\;\text{vec}(\Im\{\widetilde{\I}_M\A\})^{\tps}\;\utheta^{\tps}\;\sigma_v^2\right]^{\tps}\in\Rset^{K_{\varphi}\times1},
\end{equation} 
where $\widetilde{\I}_M\triangleq\left[\uo_{M-1}\;\ue_{1}\;\cdots\;\ue_{M}\right]\in\Rset^{(M-1)\times M}$, when multiplies from the left, preserves all the rows except for the first one of a matrix with $M$ rows (recall $A_{1d}\in\Rset_{\geq0}$), and $K_{\varphi}\triangleq2MD+1$ is the total number of unknown parameters.

Since $\left\{\uy[t]\right\}_{t=1}^{T}$ are all CN and i.i.d., the FIM elements corresponding to $\A,\utheta$ and $\sigma_v^2$ are given by\footnote{We specifically use a different notation for the FIM's elements, with slight abuse in notation also in \eqref{FIM_example}, which is more natural w.r.t.\ their definition, and therefore easier to comprehend in this context, for the sake of clarity.} (see, e.g., \cite{collier2005fisher})
\begin{equation}\label{FIM_CN_element}
\begin{gathered}
J[\varphi_i,\varphi_j]=T\cdot{\Tr\left(\R_y^{-1}\left(\nabla_{\varphi_i}\R_y\right)\R_y^{-1}\left(\nabla_{\varphi_j}\R_y\right)\right)},\\
\forall i,j\in\{1,\ldots,K_{\varphi}\},
\end{gathered}
\end{equation}
where $\J(\uvarphi)$ denotes the FIM. In Appendix \ref{AppD} we show that
\begin{align}
\nabla_{\Re\{A_{md}\}}\R_y&=\F(\utheta)\otimes\left(\ue_m\ua_d^{\dagger}+\ua_d\ue_m^{\tps}\right),\label{gradrealA}\\
\nabla_{\Im\{A_{\tilde{m}d}\}}\R_y&=\jmath\cdot\F(\utheta)\otimes\left(\ue_{\tilde{m}}\ua_d^{\dagger}-\ua_d\ue_{\tilde{m}}^{\tps}\right),\label{gradimagA}\\
\nabla_{\theta_d}\R_y&=\nabla_{\theta_d}\F(\utheta)\otimes\A_d,\label{gradthetad}\\
\nabla_{\sigma_v^2}\R_y&=\I_{3M}\label{gradsigmav}.
\end{align}
In addition, by the Woodbury matrix identity \cite{woodbury1950inverting}, we have 
\begin{equation}
\label{FIM_term_R_inv}
\R_y^{-1}\hspace{-0.025cm}=\hspace{-0.025cm}\frac{1}{\sigma_v^2}\left[\I_{3M}\hspace{-0.025cm}-\hspace{-0.025cm}\frac{1}{\sigma_v^2}\bA(\utheta)\left( \I_D + \bA(\utheta)^{\dagger}\bA(\utheta) \right)^{-1}\bA(\utheta)^{\dagger}\right].
\end{equation}
Therefore, all the required expressions for the computation of the FIM are at hand. For example, by \eqref{gradthetad} and \eqref{gradsigmav}, we have
\begin{equation}\label{FIM_example}
J[\theta_d,\sigma_v^2] = T\cdot\Tr\left(\R_y^{-1}\left(\nabla_{\theta_d}\F(\utheta)\otimes\A_d\right)\R_y^{-1}\right).
\end{equation}
The CRLB on the MSE in unbiased joint estimation of $\A,\utheta$ and $\sigma_v^2$ is readily given by the inverse of the FIM $\J(\uvarphi)$, whose elements are prescribed in \eqref{FIM_CN_element}, using \eqref{gradrealA}--\eqref{FIM_term_R_inv}. 

Next, we further utilize this result in order to derive an approximate iterative solution algorithm for the computation of the MLE
\begin{equation}\label{GaussianMLE}
\huvarphi_{\ML}\triangleq\underset{\text{{\boldmath $\varphi$}$\in\Rset^{K_{\varphi}\times1}$}}{\argmax} \prod_{t=1}^{T}{\frac{1}{\pi^{3M}\det\left(\R_y\right)}e^{-\text{{\boldmath$y$}$[t]^{\dagger}${\boldmath $R$}$_y^{-1}${\boldmath $y$}$[t]$}}}.
\end{equation}
\vspace{-0.5cm}
\subsection{MLE Computation via the Fisher Scoring Algorithm}\label{subsec:FSA}
Given an initial estimate of $\uvarphi$, the FSA can be used in order to obtain a stationary point of the log-likelihood (if it converges). Moreover, if this initial estimate is ``close" enough to the global maximizer of the log-likelihood, the algorithm is likely to converge to the MLE. Formally, the update equation of the FSA for the $n$-th iteration is given by
\begin{equation}\label{FSAupdatequation}
\huvarphi^{(n)} = \huvarphi^{(n-1)} + \J^{-1}\left(\huvarphi^{(n-1)}\right)\left.\nabla_{\bm{\varphi}}\mathcal{L}\right|_{\scriptsize{\uvarphi}=\huvarphi^{(n-1)}},
\end{equation}
where $\huvarphi^{(n)}$ is the estimate of $\uvarphi$ in the $n$-th iteration, and
\begin{equation}\label{loglikelihood}
\mathcal{L}(\uvarphi)\triangleq -T\cdot\left(\log\det\R_y + \Tr\left(\hR_y\R_y^{-1}\right)\right)+c
\end{equation}
is the log-likelihood function, where $c$ is a constant independent of $\uvarphi$. Hence, in order to carry out the iterations \eqref{FSAupdatequation} such that they will successfully converge to the MLE $\huvarphi_{\ML}$, three ingredients are required: a sufficiently ``good" initial solution $\huvarphi^{(0)}$, and closed-form expressions of the FIM and the score function, i.e., $\J(\uvarphi)$ and $\nabla_{\bm{\varphi}}\mathcal{L}$, resp.

Now, recall that $\hA_{\cpd},\hutheta_{\cpd}$ and $\widehat{\sigma}_v^2$, specified in \eqref{CPDDOAestimate} and \eqref{covmatofxest}, resp., are consistent estimates. Therefore, when rearranged in vector form according to \eqref{defofvarphi},
\begin{equation}\label{initialestimateFSA}
\huvarphi^{(0)}\triangleq\left[\text{vec}(\Re\{\hA_{\cpd}\})^{\tps}\;\text{vec}(\Im\{\widetilde{\I}_M\hA_{\cpd}\})^{\tps}\;\hutheta_{\cpd}^{\tps}\;\widehat{\sigma}_v^2\right]^{\tps}
\end{equation}
can serve as a ``good" initial estimate of $\uvarphi$, which is presumably ``close" to the global maximizer of \eqref{loglikelihood}. Furthermore, note that we have already obtained closed-form expressions for the elements of the FIM $\J(\uvarphi)$, given in \eqref{FIM_CN_element} while using \eqref{gradrealA}--\eqref{FIM_term_R_inv}, which can be computed for any $\uvarphi$. Moreover, using the chain rule, we have for all $i\in\{1,\ldots,K_{\varphi}\}$
\begin{equation}\label{score_wrt_varphi}
\nabla_{\varphi_i}\mathcal{L}=\sum_{k,\ell=1}^{3M}{\frac{\partial\mathcal{L}\left(\uvarphi\right)}{\partial R_{y_{k\ell}}}\cdot\frac{\partial R_{y_{k\ell}}}{\partial \varphi_i}}=\Tr\left(\nabla_{\bm{R}_y}\mathcal{L}\cdot\nabla_{\varphi_i}\R_y^{\tps}\right),
\end{equation}
where $\nabla_{\bm{R}_y}\mathcal{L}\in\Cset^{3M\times 3M}$ and $\nabla_{\varphi_i}\R_y\in\Cset^{3M\times 3M}$. As we show in Appendix \ref{AppD},
\begin{equation}\label{score_wrt_covmatR}
\nabla_{\bm{R}_y}\mathcal{L}=-T\cdot\left[\R_y^{-1}\left(\I_{3M}-\hR_y\R_y^{-1}\right)\right]^{\tps},
\end{equation}
and with the already obtained expressions \eqref{gradrealA}--\eqref{gradsigmav}, we have obtained a closed-form expression of the score \eqref{score_wrt_varphi} w.r.t.\ any element of $\uvarphi$, such that $\nabla_{\bm{\varphi}}\mathcal{L}$ can be computed for any $\uvarphi$. 

By this, we now have the three required ingredients for a successful operation of the FSA for the computation of the MLE $\huvarphi_{\ML}$. Note further that as long as $M\cdot D$ is not too ``large" (in terms of matrix inversion), the computation of $\J^{-1}\left(\huvarphi^{(n)}\right)$ is not very costly w.r.t.\ computational load.
\vspace{-0.3cm}
\section{Phase 2: The KLD-based Blind DOAs Estimate}\label{sec:KLDcovFit}
Our goal in this section is to show that \eqref{FSAupdatequation} can successfully operate not only for Gaussian signals. Moreover, we would like to achieve analytical arguments which explain and justify two aspects. First, the rationale of invoking results obtained under the Gaussian model for non-Gaussian signals. And second, the resulting higher accuracy attained with this approach for our primary goal, DOAs estimation. To this end, we begin by showing that, in general, it is theoretically possible to attain better estimates, in terms of lower MSEs, than the CPD-based estimates \eqref{CPDDOAestimate} with only mild assumptions on the signal model.
\vspace{-0.7cm}
\subsection{Suboptimality of the Equally-Weighted LS Estimates}\label{subsec:suboptimality}
In our general framework, as long as the fourth-order joint cumulants of the measurements $\{\uy[t]\}_{t=1}^T$ are finite, i.e.,
\begin{equation}\label{finite4thordercumulant}
\begin{gathered}
\exists \varrho\in\Rset_+: \forall i,j,k,\ell\in\{1,\ldots,M\}:\\
|\kappa_y[i,j,k,\ell]|\triangleq|\text{cum}(y_i[t],y_j^*[t],y_k[t],y^*_\ell[t])|<\varrho,
\end{gathered}
\end{equation}
the estimate $\hR_y$, which is {\myfontb\emph{not}} necessarily the MLE of $\R_y$, is still consistent by virtue of the law of large numbers \cite{ross2006first}. Moreover, denoting $\ur_y\triangleq\text{vec}^*\left(\R_y\right)\in\Cset^{K_r\times1}$ with $K_r\triangleq3M(3M+1)/2$, by virtue of the central limit theorem\footnote{Note that the diagonal elements of $\hR_y$ are in fact real-valued, and are therefore normal, rather than CN. However, this does not affect the following derivation, so we allow this slight abuse in notation.} \cite{ross2006first},
\begin{equation}\label{CLT_cov_mat}
\text{vec}^*\left(\hR_y\right)\triangleq\hur_y\xrightarrow[\quad\;]{d}\mathcal{CN}\left(\ur_y,\mGamma_{\varepsilon},\C_{\varepsilon}\right),
\end{equation}
where $\mGamma_{\varepsilon}\in\Cset^{K_r\times K_r}$ and $\C_{\varepsilon}\in\Cset^{K_r\times K_r}$ are the covariance and pseudo-covariance matrices of $\hur_y$, resp. Note that $\ur_y, \mGamma_{\varepsilon}$ and $\C_{\varepsilon}$ all depend on the unknown parameters vector $\uvarphi$.

Now, since the asymptotic distribution of $\hur_y$, or, equivalently, $\hR_y$, is given by \eqref{CLT_cov_mat}, as we show in Appendix \ref{AppE}, ML estimation of $\uvarphi$ based on $\hR_y$ asymptotically amounts to
\begin{equation}\label{AsymptoticMLEviaOWLS}
\huvarphi_{\OWNLLS}\triangleq\underset{\text{{\boldmath $\varphi$}$\in\Rset^{K_{\varphi}\times1}$}}{\argmin} \left[\uvarep^{\dagger}\quad\uvarep^{\tps}\right]\R_{\varepsilon}^{-1}\left[\uvarep^{\tps}\quad\uvarep^{\dagger}\right]^{\tps}\in\Rset_+,
\end{equation}
where $\uvarep\triangleq\hur_y-\ur_y$ denotes the vector of estimation errors $\Ep\triangleq\hR_y-\R_y$ in estimation of $\R_y$, i.e., $\uvarep=\text{vec}^*\left(\Ep\right)$, and
\begin{equation}\label{augmentedCovariance4asymptoticMLE}
\R_{\varepsilon}\triangleq\Eset\left[\begin{bmatrix}
\uvarep\\
\uvarep^*\end{bmatrix}\left[\uvarep^{\dagger}\;\;\uvarep^{\tps}\right]\right]=\begin{bmatrix}
\mGamma_{\varepsilon} & \C_{\varepsilon} \\
\C_{\varepsilon}^* & \mGamma_{\varepsilon}^*\end{bmatrix}\in\Cset^{2K_r\times2K_r}.
\end{equation}
Evidently, as seen from \eqref{AsymptoticMLEviaOWLS}, asymptotically optimal estimation of the unknown parameters based only on $\hR_y$ is via the Optimally-Weighted Non-Linear LS (OWNLLS) criterion. Note that the optimal weight matrix $\R_{\varepsilon}^{-1}$ is a particular case of a general (not necessarily optimal) positive-definite weight matrix, denoted, say, by $\W$. Indeed, with equal weights, namely $\W=\I_{2K_r}$, the criterion yields the CPD-based estimates \eqref{CPDDOAestimate}. However, as we show in Appendix \ref{AppE}, the optimal weight matrix $\R_{\varepsilon}^{-1}$ is generally \emph{not} equal to a scaled identity matrix. Therefore, in general, the equally-weighted LS criterion \eqref{CPDDOAestimate} yields suboptimal estimates. Thus, we conclude that even for the general (non-Gaussian) signal model, pursuing improved, more accurate estimates than $\hutheta_{\cpd}$ and $\hA_{\cpd}$ is not in vain.

We stress that the estimate \eqref{AsymptoticMLEviaOWLS} is generally not the MLE w.r.t.\ the raw data $\{\uy[t]\}_{t=1}^T$, since $\hR_y$ is not necessarily a sufficient statistic. Nevertheless, despite the possible strict statistical insufficiency of $\hR_y$, it still encapsulates valuable information which is not fully used via equally-weighted LS fitting \eqref{CPDDOAestimate}, and can be further exploited, as we show next.
\vspace{-0.3cm}
\subsection{Enhancement via the KLD Covariance Fitting Criterion}\label{subsec:KLDcovfittingNonGaussian}
Resorting temporarily to the framework of Gaussian signals, as explained in Subsection \ref{subsec:FSA}, the FSA \eqref{FSAupdatequation} asymptotically yield the MLE\footnote{If initialized in the basin of attraction of the global maximizer of \eqref{loglikelihood}.} $\huvarphi_{\ML}$. As can be easily seen from the log-likelihood \eqref{loglikelihood}, the sufficient statistic in the Gaussian model is the sample covariance matrix $\hR_y$. Therefore, ML estimation of $\uvarphi$ based {\myfontb \emph{only}} on $\hR_y$ yields the MLE of $\uvarphi$ based on the raw data $\{\uy[t]\}_{t=1}^T$. Moreover, as shown in Appendix \ref{AppE}, ML estimation of $\uvarphi$ based on $\hR_y$ asymptotically amount to OWNLLS estimation of $\uvarphi$ based on $\hR_y$ \eqref{AsymptoticMLEviaOWLS}. Consequently, it follows that for the CN signal model \eqref{CN_samples}, asymptotically,
\begin{equation}\label{OWNLLSisMLE}
\huvarphi_{\ML}\approx\huvarphi_{\OWNLLS}.
\end{equation}
Accordingly, this implies that the iterations \eqref{FSAupdatequation}, namely the FSA for the Gaussian model, are (approximately) an implicit computation of the OWNLLS estimate \eqref{AsymptoticMLEviaOWLS}. In other words, maximizing the log-likelihood \eqref{loglikelihood} is asymptotically equivalent to minimizing the OWNLLS objective in \eqref{AsymptoticMLEviaOWLS}. Hence \eqref{FSAupdatequation} is an iterative procedure for a solution of either of these equivalent objectives. Furthermore, note that in this model, the optimal weight matrix $\R_{\varepsilon}^{-1}$ is computed under the Gaussian model \eqref{CN_samples} as well, and is thus fully specified by the covariances and pseudo-covariances (for all $i,j,k,\ell\in\{1,\ldots,M\}$)
\begin{equation}\label{covandpseudocovGaussian}
\Eset\left[\ep_{ij}\ep_{k\ell}^*\right]=\frac{1}{T}R_{y_{ik}}R^*_{y_{j\ell}},\quad \Eset\left[\ep_{ij}\ep_{k\ell}\right]=\frac{1}{T}R_{y_{i\ell}}R^*_{y_{jk}}.
\end{equation}
For the detailed derivation of \eqref{covandpseudocovGaussian}, see Appendix \ref{AppE}.

At this point, observe that the log-likelihood \eqref{loglikelihood} reads
\begin{equation}\label{loglikelihoodisKLD}
\begin{gathered}
\mathcal{L}(\uvarphi)=-T\cdot\left(\log\det\R_y + \Tr\left(\hR_y\R_y^{-1}\right)\right) + c\\
=-T\cdot\left(\log\left(\frac{\det\R_y}{\det\hR_y}\right) + \Tr\left(\hR_y\R_y^{-1}\right) - 3M\right) + \widetilde{c}\\
=-T\cdot D_{\KL}\left(\mathcal{CN}(\uo_{3M},\hR_y),\mathcal{CN}(\uo_{3M},\R_y)\right) + \widetilde{c},
\end{gathered}
\end{equation}
where $D_{\KL}\left(\cdot,\cdot\right)$ denotes the KLD \cite{mackay2003information}, and $\widetilde{c}$ is a constant independent of $\uvarphi$. Clearly, it follows from \eqref{loglikelihoodisKLD} that maximizing the log-likelihood $\mathcal{L}(\uvarphi)$ is equivalent to minimizing the KLD $D_{\KL}\left(\mathcal{CN}(\uo_{3M},\hR_y),\mathcal{CN}(\uo_{3M},\R_y)\right)$, denoted from here on by $D^{\mathcal{CN}}_{\KL}\hspace{-0.075cm}\left(\hR_y,\R_y\right)$ for shorthand. However, since the CN distribution is fully characterized by Second-Order Statistics (SOSs), note that the KLD of two zero-mean multivariate CN distributions serves in itself as a plausible criterion for consistent covariance matrix estimation. Indeed, from Gibbs' inequality, the KLD is always non-negative (e.g., \cite{mackay2003information}), and 
\begin{equation}\label{KLDifffcond}
D^{\mathcal{CN}}_{\KL}\hspace{-0.075cm}\left(\hR_y,\R_y\right)=0 \iff \hR_y=\R_y,
\end{equation}
{\myfontb\emph{regardless}} of the true underlying distributions governing $\hR_y$ and $\R_y$. Thus, asymptotically, we now have
\begin{equation}\label{AsymptoticKLDviaOWLS}
\begin{gathered}
\hspace{-0.3cm}\huvarphi_{\KLD}\triangleq\underset{\text{{\boldmath $\varphi$}$\in\Rset^{K_{\varphi}\times1}$}}{\argmin} D^{\mathcal{CN}}_{\KL}\hspace{-0.075cm}\left(\hR_y,\R_y\right)\underset{\substack{\text{Due to} \\ \eqref{loglikelihoodisKLD}}}{=}\underset{\text{{\boldmath $\varphi$}$\in\Rset^{K_{\varphi}\times1}$}}{\argmax} \mathcal{L}(\uvarphi)\\
\;\;\;\quad\underset{\substack{\text{Due to} \\ \eqref{OWNLLSisMLE}}}{\approx}\underset{\text{{\boldmath $\varphi$}$\in\Rset^{K_{\varphi}\times1}$}}{\argmin} \left[\uvarep^{\dagger}\quad\uvarep^{\tps}\right]\R_{\varepsilon}^{-1}\left[\uvarep^{\tps}\quad\uvarep^{\dagger}\right]^{\tps}=\huvarphi_{\OWNLLS},
\end{gathered}
\end{equation}
where the optimal weight matrix $\R_{\varepsilon}^{-1}$ in \eqref{AsymptoticKLDviaOWLS}, now computed for a general ({\myfontb\emph{not necessarily}} Gaussian) signal model, is fully specified by (see Appendix \ref{AppE})
\begin{align}
\Eset\left[\ep_{ij}\ep_{k\ell}^*\right]&=\frac{1}{T}\left(\kappa_y[i,j,\ell,k]+R_{y_{ik}}R^*_{y_{j\ell}}\right),\label{covandpseudocovnonGaussian1}\\
\Eset\left[\ep_{ij}\ep_{k\ell}\right]&=\frac{1}{T}\left(\kappa_y[i,j,k,\ell]+R_{y_{i\ell}}R^*_{y_{jk}}\right),\label{covandpseudocovnonGaussian2}
\end{align}
for all $i,j,k,\ell\in\{1,\ldots,M\}$. Notice that whenever $\kappa_y[i,j,\ell,k]=0$, e.g., for CN signals, \eqref{covandpseudocovnonGaussian1}--\eqref{covandpseudocovnonGaussian2} coincide with \eqref{covandpseudocovGaussian}. Otherwise, \eqref{covandpseudocovGaussian}, which depend only on the elements of $\R_y$, can be regarded as a SOS-based approximation for \eqref{covandpseudocovnonGaussian1}--\eqref{covandpseudocovnonGaussian2}. Therefore, it follows from \eqref{AsymptoticKLDviaOWLS} that in the general case, namely regardless of the distribution of $\uy[t]$ (under the mild condition \eqref{finite4thordercumulant}), the iterations \eqref{FSAupdatequation} serve as a quasi-Newton algorithm (e.g., \cite{bonnans2006numerical}) for an implicit computation of our proposed KLD-based estimate, which is also the SOS-based approximate OWNLLS estimate. That is, the weight matrix $\R^{-1}_{\varepsilon}$ in \eqref{AsymptoticKLDviaOWLS} for this approximate OWNLLS estimate is specified in \eqref{covandpseudocovGaussian}, which depend on the SOS $\R_y$, and $T$. Of course, \eqref{AsymptoticKLDviaOWLS} also implies that $\huvarphi_{\KLD}$ is a consistent estimate.

Summarizing the above, our proposed blind DOAs estimate $\hutheta_{\KLD}$ is extracted from $\huvarphi_{\KLD}$ (as its $D$ penultimate entries, according to \eqref{defofvarphi}), which is computed via the iterations \eqref{FSAupdatequation}---the FSA for CN signals. With this, the instrumental role of the CPD-based estimate is now revealed. As seen from \eqref{AsymptoticKLDviaOWLS}, our desired KLD-based DOAs estimate is the solution of a non-convex high-dimensional optimization problem. In contrast, although the CPD-based estimate is statistically inferior, it is nevertheless consistent, and can be obtained (relatively) efficiently via AJD. Thus, the KLD-based estimate and its iterative solution \eqref{FSAupdatequation} are of practical value only when provided with a sufficiently ``good" initial estimate, which lies in the basin of attraction of the global minimizer of $D^{\mathcal{CN}}_{\KL}\hspace{-0.075cm}\left(\hR_y,\R_y\right)$. Hence the key role of $\hutheta_{\cpd},\hA_{\cpd}$ in our overall proposed solution.
\vspace{-0.2cm}
\section{Simulation Results}\label{sec:simulationresults}
In this section, we consider three simulation experiments of blind DOAs estimation for different scenarios in order to corroborate our analytical derivations by empirical results. First, for a ULA, we demonstrate the asymptotic optimality of the proposed estimate for Gaussian signals. Second, we demonstrate its consistency for non-Gaussian signals, this time with an uncalibrated ULA, exemplifying the robustness of our blind approach w.r.t.\ the signals' distributions. Lastly, we demonstrate the robustness of our blind estimate w.r.t.\ the array geometry and reliability by considering a Uniform Circular Array (UCA) with faulty sensors. In all three simulation experiments we also compare our proposed method to Zhang \etal's TriD blind estimate \cite{zhang2012trilinear}, thus demonstrating the considerable accuracy enhancement w.r.t.\ a state-of-the-art competitor.

Throughout, we consider model \eqref{modelequation} with unit variance sources, and our proposed estimates were obtained as follows:

\tcbset{colframe=gray!95!blue,size=small,width=0.49\textwidth,arc=2.1mm,outer arc=1mm}
\begin{tcolorbox}[upperbox=visible,colback=white]
\begin{enumerate}
\item Compute $\hR_x=\hR_y-\widehat{\sigma}_v^2\I_{3M}$ and construct $\htenR_x$;
\item \underline{Phase 1}: Compute $\hutheta_{\cpd},\hA_{\cpd}$ via the modified AC-DC algorithm (Section \ref{sec:ModACDCalg}), with the initial estimates $\hutheta_{\EJD},\hA_{\EJD}$ (Subsection \ref{subsec:initModACDC}); and
\item \underline{Phase 2}: Compute $\hutheta_{\KLD},\hA_{\KLD}$ via \eqref{FSAupdatequation} (the FSA for CN signals), with the initial estimates $\hutheta_{\cpd},\hA_{\cpd}$.
\end{enumerate}
\end{tcolorbox}
\noindent All empirical results were obtained by $10^4$ independent trials.

In the first two experiments which follow (Subsections \ref{subsec:GaussianSignals} and \ref{subsec:QPSKLaplace}), we consider a ULA with $M=7$ AVSs and half wavelength inter-element spacing (i.e., $\lambda/2$), and $D=3$ sources impinging from azimuth angles $\utheta=\left[-56^\circ\;43^\circ\;71^\circ\right]^{\tps}$.
\begin{figure*}[]
	\centering
	\begin{subfigure}[b]{0.49\textwidth}
		\includegraphics[width=\textwidth]{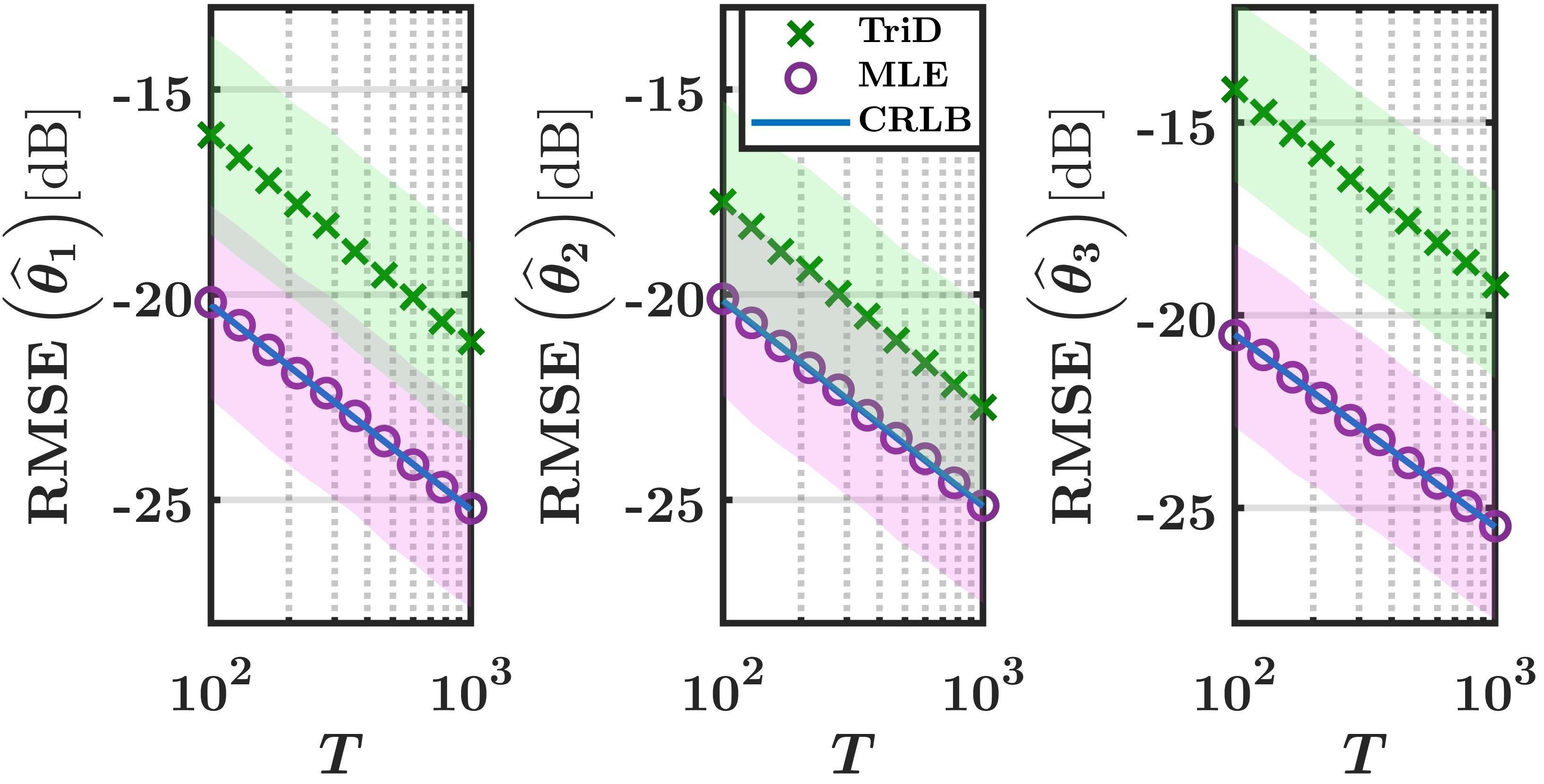}\vspace{-0.1cm}
		\caption{}\vspace{-0.1cm}
		\label{fig:exp1_vs_T}
	\end{subfigure}%
	~
	\begin{subfigure}[b]{0.49\textwidth}
		\includegraphics[width=\textwidth]{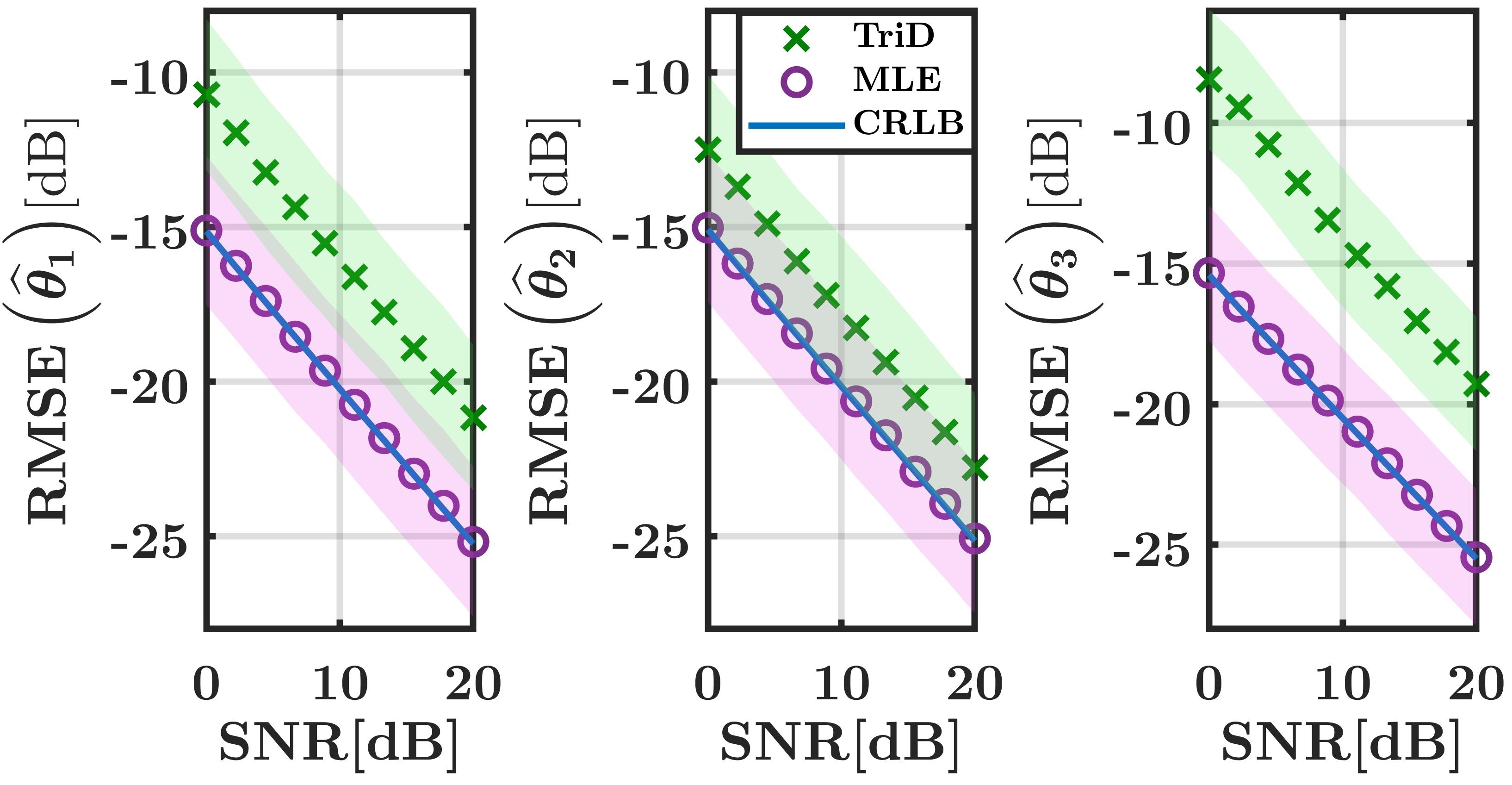}\vspace{-0.1cm}
		\caption{}\vspace{-0.1cm}
		\label{fig:exp1_vs_SNR}
	\end{subfigure}
	\caption{RMSE of the DOAs estimates with standard deviations envelopes. (a) vs.\ $T$, for a fixed SNR level of $10$[dB] (b) vs.\ SNR, for a fixed sample size of $T=100$. Evidently, for CN signals our proposed estimates are optimal and attain the CRLB. The gain w.r.t.\ TriD in this scenario reaches up to $\sim\hspace{-0.05cm}5$[dB].}
	\label{fig:exp1}
\end{figure*}
\begin{figure*}[]
	\centering
	\includegraphics[width=\textwidth]{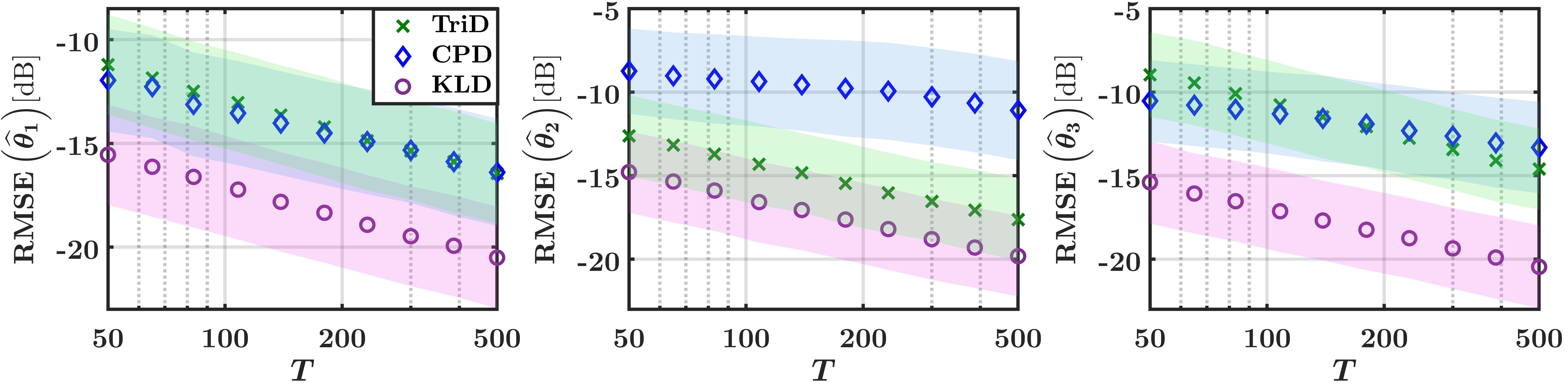}
	\caption{RMSE of the DOAs estimates vs.\ $T$, for a fixed SNR level of $5$[dB], with standard deviations envelopes. While the CPD-based (phase 1) estimates are only competitive to the TriD estimates, the proposed enhanced KDL-based (phase 2) estimates exhibit a considerable performance gain w.r.t.\ TriD for this scenario as well (QPKS source and Laplace noise), reaching up to $\sim\hspace{-0.05cm}7$[dB].}\vspace{-0.1cm}
	\label{fig:exp2}
\end{figure*}
\begin{figure*}[]
	\centering
	\begin{subfigure}[b]{0.49\textwidth}
		\includegraphics[width=\textwidth]{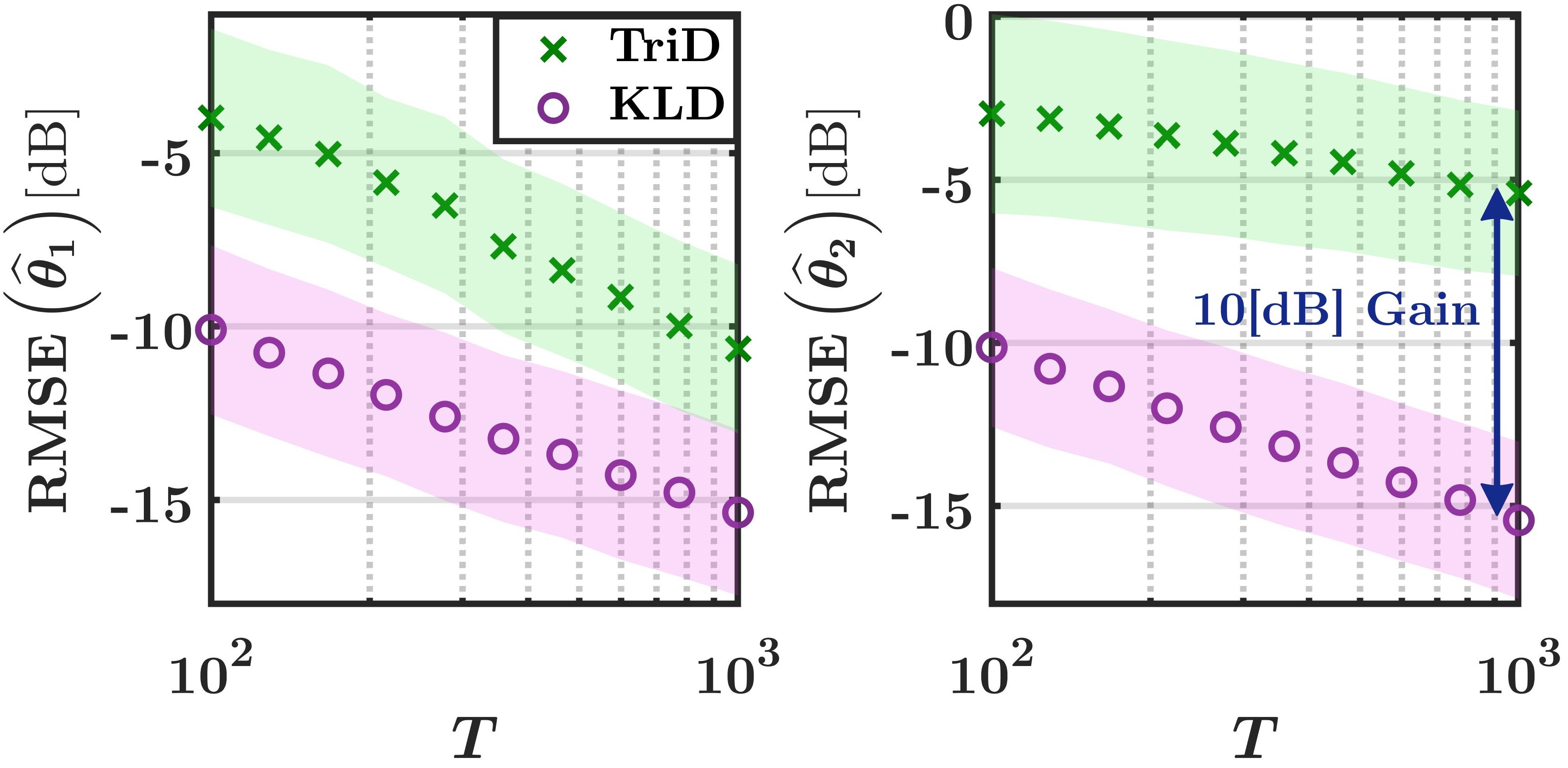}\vspace{-0.1cm}
		\caption{}\vspace{-0.1cm}
		\label{fig:exp3_vs_T}
	\end{subfigure}%
	~
	\begin{subfigure}[b]{0.49\textwidth}
		\includegraphics[width=\textwidth]{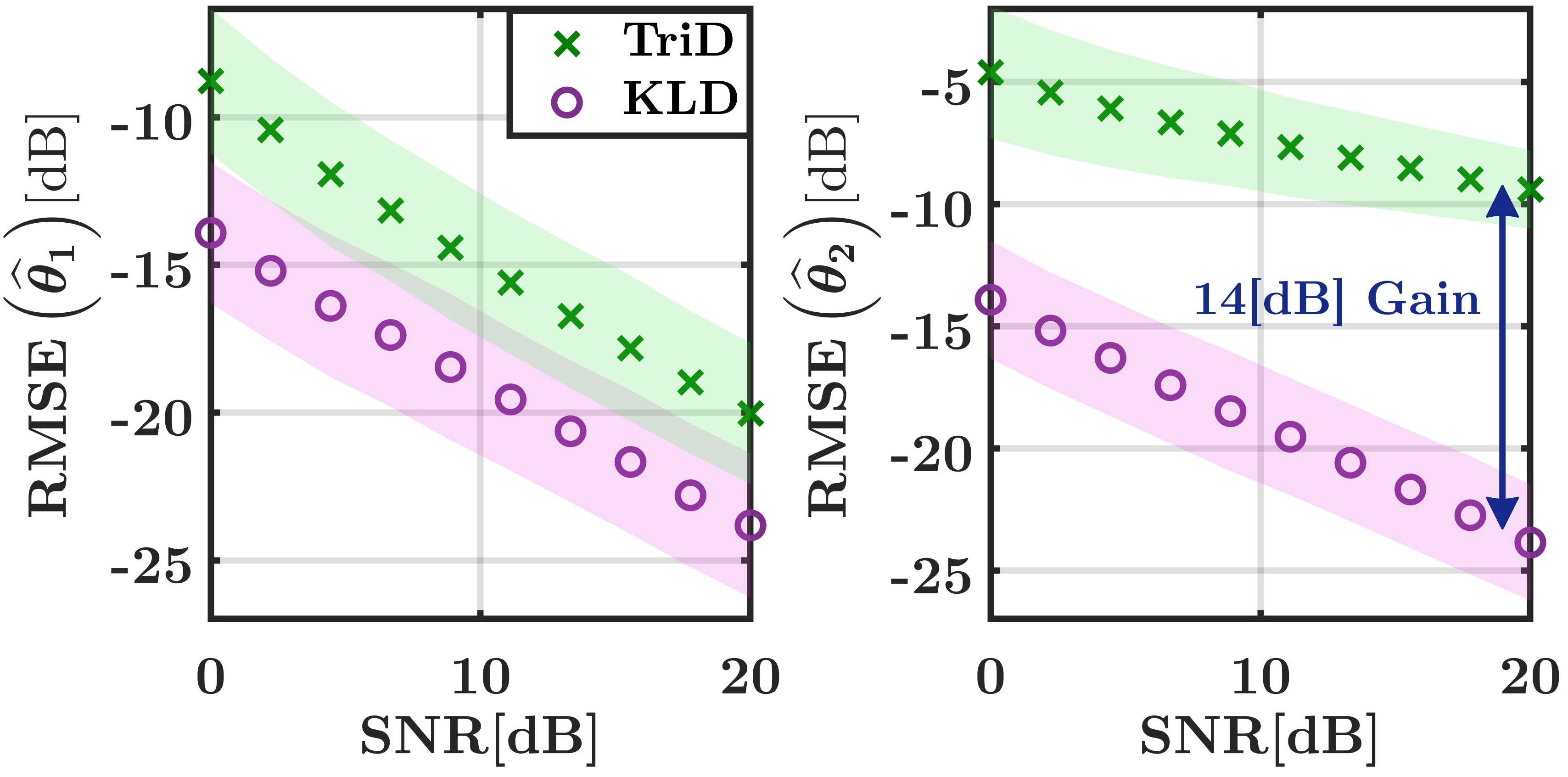}\vspace{-0.1cm}
		\caption{}\vspace{-0.1cm}
		\label{fig:exp3_vs_SNR}
	\end{subfigure}
	\caption{RMSE of the DOAs estimates with standard deviations envelopes. (a) vs.\ $T$, for a fixed SNR level of $0$[dB] (b) vs.\ SNR, for a fixed sample size of $T=500$. Here as well, for a UCA with faulty elements and Gaussian mixture sources, the improved accuracy is substantial, reaching up to $\sim14$[dB].}
	\label{fig:exp3}\vspace{-0.4cm}
\end{figure*}
\begin{figure}[]
	\centering
	\includegraphics[width=0.49\textwidth]{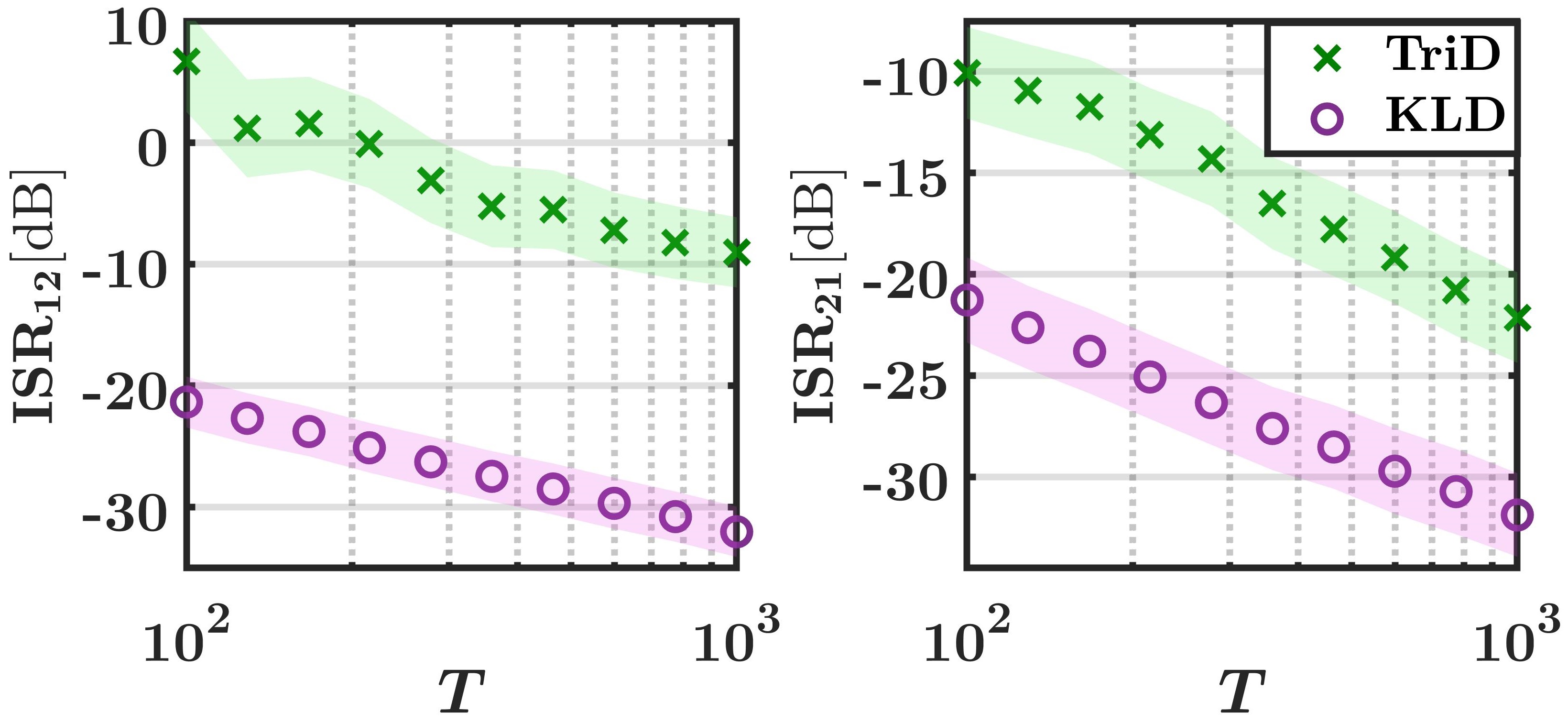}
	\caption{ISR vs.\ $T$, for a fixed SNR level of $0$[dB], with standard deviations envelopes. On top of higher accuracy in DOAs estimation, our KLD-based estimate of the mixing matrix $\A$ also yields enhanced separation performance relative the TridD estimate.}\vspace{-0.1cm}
	\label{fig:exp4}
\end{figure}
\vspace{-0.7cm}
\subsection{Optimal Performance for Gaussian Signals}\label{subsec:GaussianSignals}
In this experiment CN sources are considered. Therefore, the received signals' distribution is prescribed by \eqref{CN_samples}, and it follows that $\hutheta_{\KLD}=\hutheta_{\ML}$ (and also $\hA_{\KLD}=\hA_{\ML}$). Fig.\ \ref{fig:exp1_vs_T} presents the Root MSE (RMSE) of the DOAs estimates vs.\ the sample size $T$ for a fixed SNR level of $10$[dB]. Likewise, Fig.\ \ref{fig:exp1_vs_SNR} presents the RMSE of the DOAs estimates, this time vs.\ the SNR for a fixed sample size of $T=100$. Clearly, it is seen that our proposed estimates attain the CRLB, namely their optimality is demonstrated. Further, the accuracy improvement w.r.t.\ the TriD method (up to $\sim\hspace{-0.05cm}5$[dB]) is uniformly obtained, for all the DOAs in any sample size and/or any SNR.
\vspace{-0.3cm}
\subsection{QPSK Sources, Laplace Noise and an Uncalibrated Array}\label{subsec:QPSKLaplace}
In the second experiment we consider the same ULA as in the first one, but now with inter-AVSs' positioning errors and gain offsets \cite{song2014quasi}. Formally, each element of the {\myfontb\emph{uncalibrated}} steering vectors matrix $\A$ now reads
\begin{equation*}\label{manifoldmatrixoffsets}
\begin{gathered}
A_{md}=A_{md}'\cdot g_m\cdot e^{\jmath k\left(\cos(\theta_d)\Delta_x^{(m)}+\sin(\theta_d)\Delta_y^{(m)}\right)},\\
\forall m\in\{1,\ldots,7\},\; \forall d\in\{1,2,3\}
\end{gathered}
\end{equation*}
where $g_m\in\Rset_+$ and $\Delta_x^{(m)},\Delta_y^{(m)}\in\Rset$ are the gain offset and the $x$- and $y$-axis positioning errors of the $m$-th AVS, resp., $k\triangleq2\pi/\lambda$ is the sources' wavenumber, and $A_{md}'\triangleq\exp\left(\jmath\pi(m-1)\cos(\theta_d)\right)$. In our simulation, the gain offsets $\{g_m\}$ and the positioning errors, $\{\Delta_x^{(m)}\}$ and $\{\Delta_y^{(m)}\}$, were independently drawn (once, and then fixed) from the uniform distributions $U(0.7,1.3)$ and $U(-1,1)$, resp. In order to demonstrate the robustness of our method w.r.t.\ the signals' distributions, we consider Quadrature Phase-Shift Keying (QPSK) sources. Moreover, here the real and imaginary parts of the noise $\uv[t]$ are independent Laplace distributed, intentionally \emph{not} in accordance with our CN assumption, in order to demonstrate our proposed estimate's robustness to the noise distribution. Fig.\ \ref{fig:exp2} presents the DOAs RMSE vs.\ the sample size $T$ for a fixed SNR level of $5$[dB], and implicitly demonstrate that \eqref{AsymptoticKLDviaOWLS} holds, and hence the preceding derivation, particularly \eqref{OWNLLSisMLE}. Indeed, here as well both blind estimates are robust w.r.t.\ the gain and phase offsets, though similarly to the previous scenario examined in Subsection \ref{subsec:GaussianSignals}, a considerable improvement is still obtained relative to the TriD method up to $\sim\hspace{-0.1cm}7$[dB]. Additionally, in this experiment only, we also present the performance of the (phase 1) CPD-based estimates, which play the key role of initial estimates in the computation of the proposed enhanced KLD-based estimates. Similar trends as in Fig.\ \ref{fig:exp1_vs_SNR} were obtained in simulations for different SNR levels in this scenario as well.
\vspace{-0.3cm}
\subsection{Gaussian Mixtures and a UCA with Faulty Elements}\label{subsec:UCAFaultyelements}
In our last experiment, we consider a UCA with $M=5$ equiangular spaced AVSs placed on its circumference \cite{ioannides2005uniform}. In addition to changing the array geometry, we further investigate the performance of the proposed method in the presence of sensors failures (e.g., \cite{vigneshwaran2007direction}). Specifically, we assume that the $2^{\text{nd}}$ and $4^{\text{th}}$ AVSs are faulty, such that all their (three) elements do not receive any external signal, and thus contain only noise. Assuming for simplicity that the radius of the UCA is $\lambda/2$,
\begin{equation*}\label{manifoldmatrixfaultyelements}
A_{md}=\begin{cases}
e^{\jmath\pi\cos\left(\theta_d-\tfrac{2\pi(m-1)}{M}\right)}, & m=1,3,5\\
0, & m=2,4
\end{cases},\,\forall d\in\{1,2\}.
\end{equation*}
Further, we consider $D=2$ sources impinging from angles $\utheta=\left[24^\circ\;92^\circ\right]^{\tps}$, which are complex-valued Gaussian mixtures (i.e., both real and imaginary parts are i.i.d. Gaussian mixtures) of order $2$, with means $+\tfrac{1}{\sqrt{2}},-\tfrac{1}{\sqrt{2}}$ and variances $\tfrac{1}{2},\tfrac{1}{2}$ for the first and second equiprobable Gaussian components, resp.

As evident from Figs.\ \ref{fig:exp3_vs_T} and \ref{fig:exp3_vs_SNR}, presenting the DOAs RMSEs vs.\ the sample size (for fixed SNR of $0$[dB]) and the SNR (for fixed $T=500$), resp., similar trends are obtained as in the previous two experiments. This not only demonstrates the robustness of our proposed approach w.r.t.\ the array geometry, sensors malfunctioning, and the sources' distribution, but also indicates that the substantial accuracy improvement, which in this case reaches up to more than an order of magnitude, is obtained in various different scenarios. 

Lastly, we consider the Interference-to-Source Ratio (ISR),
\begin{equation}\label{ISRdefinition}
\text{ISR}_{ij}\triangleq\Eset\left[\frac{|\hA^+\A|^2_{ij}}{|\hA^+\A|^2_{ii}}\right],\; \forall i\neq j\in\{1,\ldots,D\},
\end{equation}
which measures the residual energy of the $j$-th source in the reconstruction of the $i$-th source, and is a common measure for the separation performance. Fig.\ \ref{fig:exp4}, presenting the ISR vs.\ the sample size, shows that our estimate for the mixing matrix $\A$ also yields better source separation than the TriD estimate.
\vspace{-0.3cm}
\section{Conclusion}\label{sec:conclusion}
In the context of passive AVS arrays, we presented a novel blind DOAs estimate. Rather than estimating the DOAs directly from the raw data, our estimate, obtained in a two-phase procedure, exploits the special structure of the observations' covariance matrix. In the first phase, the CPD-based estimates are obtained via the modified AC-DC algorithm, based on the the unique quadrilinear decomposition of the SOSs covariance tensor. In the second phase, the proposed estimates, which were shown to be asymptotically equivalent the OWNLLS estimates, are obtained via KLD covariance fitting. Since minimization of the KLD is equivalent to maximization of the Gaussian ML objective, our proposed KLD-based estimates are optimal for Gaussian signals, and can be computed via the FSA for non-Gaussian signals as well. Our analytical results were supported by simulation experiments in various scenarios, which demonstrated our estimate's robustness, as well as its superiority over another leading blind DOAs estimate.
\vspace{-0.35cm}
\appendices
\section{Proof of Theorem \ref{covtensoruniqeCPD}}\label{AppA}
\begin{proof}
In order to prove Theorem \ref{covtensoruniqeCPD}, we shall invoke Theorem \ref{tensoruniqueCPD}.
To this end, note that $\tenR_x$, given explicitly in \eqref{covtensorrankD2}, may also be written in the form \eqref{fourmodetensorthm} of Theorem \ref{tensoruniqueCPD} as
\begin{equation}\label{fourmodetensortenR_x}
\begin{gathered}
\mathcal{R}_x(i,j,m,n)=\sum_{d=1}^{D}{C_{id}(\utheta)C_{jd}(\utheta)A_{md}A^*_{nd}}\in\Cset,\\
\forall i,j\in\{1,2,3\},\;\forall m,n\in\{1,\ldots,M\},
\end{gathered}
\end{equation}
with $\C(\utheta),\C(\utheta),\A$ and $\A^*$ as the four factor matrices of $\tenR_x$. Then, according to Theorem \ref{tensoruniqueCPD}, the factor matrices, which are in this case $\C(\utheta)$ and $\A$, are unique up to permutation and complex scaling of columns provided that
\begin{equation}\label{kruskalcondAppA}
k_{\text{\boldmath $C(\utheta)$}}+k_{\text{\boldmath $C(\utheta)$}}+k_{\text{\boldmath $A$}}+k_{\text{\boldmath $A$}^*}\geq 2D+3.
\end{equation}
From the sensor array regularity condition, we have $\text{rank}(\A)=D$, which also implies $\text{rank}(\A^*)=\text{rank}(\A)=D$. Since $\A$ (and therefore $\A^*$) is full column rank, it follows that $k_{\text{\boldmath $A$}}=k_{\text{\boldmath $A$}^*}=D$. In addition, since $\theta_d\neq\theta_\ell$ for all $d\neq\ell$ by assumption, it follows that $\C(\utheta)$ is also full column rank, which means that $k_{\text{\boldmath $C(\utheta)$}}=\text{rank}(\C(\utheta))=\min\{3,D\}$. Therefore, in our case, the uniqueness condition \eqref{kruskalcondAppA} reads
\begin{equation}\label{kruskalcondAppAspecific}
2\cdot\min\{3,D\}+2D\geq 2D+3 \;\Longrightarrow\; 2\cdot\min\{3,D\}\geq 3.
\end{equation}
For $D\geq3$, \eqref{kruskalcondAppAspecific} obviously hold. For the complementary case $D<3$, \eqref{kruskalcondAppAspecific} becomes $2D\geq3$. Therefore, we conclude that for $D>1$, which holds by assumption, the CPD \eqref{fourmodetensortenR_x} is unique up to permutation and complex scaling of columns. Finally, since the first row of $\C(\utheta)$ is the all-ones $D$-dimensional vector and $\{A_{1d}\in\Rset_{\geq0}\}$, thus eliminating the complex scaling ambiguity, and since $\theta_1<\ldots<\theta_D$, thus eliminating the permutation ambiguity, we conclude that \eqref{fourmodetensortenR_x} is unique.
\end{proof}
\section{Proof of Theorem \ref{consistencyofCPDestimate}}\label{AppB}
\begin{proof}
To prove Theorem \ref{consistencyofCPDestimate}, we shall use the basic consistency theorem for {\myfontb\emph{extremum estimators}} \cite{newey1994large}. To this end, for shorthand, let us first define the vector of real-valued unknowns $\tuvartheta\triangleq\left[\text{vec}(\Re\{\tA\})^{\tps}\,\text{vec}(\Im\{\widetilde{\I}_M\tA\})^{\tps}\,\tutheta^{\tps}\right]^{\tps}\in\Rset^{2MD\times 1}$, where $\widetilde{\I}_M\triangleq\left[\uo_{M-1}\;\ue_{1}\;\cdots\;\ue_{M}\right]\in\Rset^{(M-1)\times M}$ preserves all the rows except for the first one of the matrix it is left-multiplied with, and further define $\widehat{Q}_T(\tuvartheta)\triangleq-\norm{\tenR\left(\tutheta,\tA\right)-\htenR_x}^2_{\rm{F}}$. Then, by virtue of Theorem 2.1 in \cite{newey1994large} due to Newey and McFadden, if there exist a function $Q_0(\tuvartheta)$ such that:
\begin{enumerate}[(i)]
	\item $Q_0(\tuvartheta)$ is uniquely maximized at $\uvartheta$;
	\item $\tuvartheta\in\Theta$, where $\Theta$ is a compact set;
	\item $Q_0(\tuvartheta)$ is continuous; and
	\item $\widehat{Q}_T(\tuvartheta)$ converges uniformly in probability to $Q_0(\tuvartheta)$,
\end{enumerate}
then 
\begin{equation}\label{consistencyAppB}
\underset{\widetilde{\text{\boldmath$\vartheta$}}\in\Theta}{\argmax}\,\widehat{Q}_T(\tuvartheta)\xrightarrow[\quad\;]{p}\uvartheta\,\Longrightarrow\,\left(\hutheta_{{\cpd}},\hA_{{\cpd}}\right)\xrightarrow[\quad\;]{p}\left(\utheta,\A\right),
\end{equation}
namely \eqref{CPDDOAestimate} are consistent estimates.

We shall now show that the four conditions above (i)--(iv) hold for the function $Q_0(\tuvartheta)\triangleq-\norm{\tenR\left(\tutheta,\tA\right)-\tenR_x}^2_{\rm{F}}$. First, condition (i) is an immediate consequence of Theorem \ref{covtensoruniqeCPD}, the uniqueness of the CPD of $\tenR_x$, proved in Appendix \ref{AppA}. Second, conditions \eqref{boundingspherecondition} and \eqref{DOAscompactsetcondition} stated in the Theorem imply that both $\tutheta$ and $\tA$ belong to compact sets. The union of these compact sets, denoted $\Theta$ henceforth, is a compact set as well, hence $\tuvartheta$ belong to the compact set $\Theta$, and condition (ii) is satisfied. Third, notice that $\tenR\left(\tutheta,\tA\right)$ is a continuous function w.r.t.\ $\tutheta$ and $\tA$ by its definition \eqref{parametrictensorfunction}, in the sense that each element of $\tenR\left(\tutheta,\tA\right)$ is a continuous function w.r.t.\ each of the elements of $\tutheta$ and of $\tA$. Additionally, note that $\widetilde{Q}_0(\tenX)\triangleq-\norm{\tenX-\tenR_x}^2_{\rm{F}}$ is a continuous function w.r.t.\ $\tenX$. Now, since $Q_0(\tuvartheta)=\widetilde{Q}_0\left((\tenR\left(\tutheta,\tA\right)\right)$, the function $Q_0(\tuvartheta)$ is a composition of continuous functions, and is therefore continuous, thus  (iii) holds. To show that condition (iv) holds, i.e., uniform convergence in probability \cite{newey1991uniform}, we first present the notion of {\myfontb\emph{stochastic equicontinuity}} (\cite{newey1991uniform}, Section 2):

\noindent\textit{$\widehat{Q}_T(\uvartheta)$ is said to be stochastically equicontinuous if for every $\epsilon,\eta>0$ there exist a sequence of random variables $\widehat{\Delta}_T(\epsilon,\eta)$ and a samples size $T_0(\epsilon,\eta)$ such that for $T\geq T_0(\epsilon,\eta)$, $\Pr\left(\left|\widehat{\Delta}_T(\epsilon,\eta)\right|>\epsilon\right)<\eta$, and for each $\uvartheta$ there is an open set $U_{\text{\boldmath$\vartheta$}}$ containing $\uvartheta$ with
	\begin{equation*}\label{stochasticequicon}
		\sup_{\text{\boldmath$\widetilde{\vartheta}$}\in U_{\text{\boldmath$\vartheta$}}}\left|\widehat{Q}_T(\tuvartheta)-\widehat{Q}_T(\uvartheta)\right|\leq\widehat{\Delta}_T(\epsilon,\eta),\quad T\geq T_0(\epsilon,\eta).
\end{equation*}}

Now, according to Lemma 2.8 in \cite{newey1994large}, if $\tuvartheta$ belong to a compact set and $Q_0(\tuvartheta)$ is continuous, then condition (iv) holds \textit{if and only if}
\begin{enumerate}[(a)]
	\item $\widehat{Q}_T(\tuvartheta)\xrightarrow[\quad\;]{p}Q_0(\tuvartheta)$ for all $\tuvartheta\in\Theta$; and
	\item $\widehat{Q}_T(\tuvartheta)$ is stochastically equicontinuous.
\end{enumerate}
As we have already shown above, $\tuvartheta$ belong to the compact set $\Theta$ and $Q_0(\tuvartheta)$ is continuous. Thus, if (a) and (b) hold, condition (iv) is fulfilled, and the proof is completed.

By the assumption in Theorem \ref{consistencyofCPDestimate}, $\hR_y$ and $\widehat{\sigma}_v^2$ are consistent. Hence $\hR_x$ defined in \eqref{covmatofxest} is consistent, and also $\htenR_x$. Further, note that $\widehat{Q}_T(\tuvartheta)$ is continuous w.r.t.\ $\htenR_x$. Therefore, by the Continuous Mapping Theorem (CMT) \cite{mann1943stochastic}, for all $\tuvartheta\in\Theta$,
\begin{equation*}
	\htenR_x\xrightarrow[\quad\;]{p}\tenR_x\;\Longrightarrow\;\widehat{Q}_T(\tuvartheta)\xrightarrow[\quad\;]{p}Q_0(\tuvartheta),
\end{equation*}
thus (a) holds. We now turn to the final phase, proving (b).

For brevity, let $\R\left(\tuvartheta\right)\coloneq\R\left(\tutheta,\tA\right)$ in accordance with \eqref{blockmatrixRx}. Using the continuity of the Frobenius norm and of $\R\left(\tuvartheta\right)$, for all $\epsilon_1,\epsilon_2>0$, there exist $\rho_1,\rho_2>0$ such that
\begin{align}\label{conditionepsilons1}
	&\norm{\tuvartheta-\uvartheta}_2<\rho_1\;\Longrightarrow\;\left|\norm{\R\left(\tuvartheta\right)}^2_{\rm{F}}-\norm{\R(\uvartheta)}^2_{\rm{F}}\right|<\epsilon_1,\nonumber\\
	&\norm{\tuvartheta-\uvartheta}_2<\rho_2\;\Longrightarrow\;\norm{\R\left(\tuvartheta\right)-\R\left(\uvartheta\right)}_{\rm{F}}<\epsilon_2.
\end{align}
Therefore, $\norm{\tuvartheta-\uvartheta}_2<\rho_{\text{min}}\triangleq\min\{\rho_1,\rho_2\}$ implies
\begin{equation}\label{minconditionAppB}
	\left|\norm{\R\left(\tuvartheta\right)}^2_{\rm{F}}-\norm{\R(\uvartheta)}^2_{\rm{F}}\right|<\epsilon_1\;\cap\;\norm{\R\left(\tuvartheta\right)-\R\left(\uvartheta\right)}_{\rm{F}}<\epsilon_2.
\end{equation}
Now, pick $\epsilon,\eta>0$, and define $\widehat{\Delta}_T(\epsilon,\eta)\triangleq\tfrac{\epsilon}{2}\cdot\tfrac{1+\norm{\widehat{\text{\boldmath$R$}}_x}_{\rm{F}}}{1+\norm{\text{\boldmath$R$}_x}_{\rm{F}}}$. From the CMT, $\hR_x\xrightarrow[\quad\;]{p}\R_x$ implies $\widehat{\Delta}_T(\epsilon,\eta)\xrightarrow[\quad\;]{p}\tfrac{\epsilon}{2}$. Therefore,
\begin{align}
	&\exists T_0(\epsilon,\eta)\in\Nset:\nonumber\\
	&\Pr\left(\left|\widehat{\Delta}_T(\epsilon,\eta)\right|>\epsilon\right)=\Pr\left(\tfrac{1+\norm{\widehat{\text{\boldmath$R$}}_x}_{\rm{F}}}{1+\norm{\text{\boldmath$R$}_x}_{\rm{F}}}>2\right)<\eta,
\end{align}
for $T\geq T_0(\epsilon,\eta)$. In addition, define $\E(\tuvartheta,\uvartheta)\triangleq\R\left(\tuvartheta\right)-\R\left(\uvartheta\right)$, $\epsilon_1\triangleq\tfrac{\epsilon}{2}\cdot\tfrac{1}{1+\norm{\text{\boldmath$R$}_x}_{\rm{F}}}$ and $\epsilon_2\triangleq\tfrac{\epsilon}{4}\cdot\tfrac{1}{1+\norm{\text{\boldmath$R$}_x}_{\rm{F}}}$, and for each $\uvartheta$ further define the open set $U_{\text{\boldmath$\vartheta$}}\triangleq\left\{\tuvartheta:\norm{\tuvartheta-\uvartheta}_2<\rho_{\text{min}}\right\}$. Then, with these notations we now have
\begin{align*}\label{laststepofproofAppB}
	&\sup_{\widetilde{\text{\boldmath$\vartheta$}}\in U_{\text{\boldmath$\vartheta$}}}\left|\widehat{Q}_T(\tuvartheta)\hspace{-0.025cm}-\hspace{-0.025cm}\widehat{Q}_T(\uvartheta)\right|\underset{\text{(I)}}{=}\nonumber\\
	&\sup_{\widetilde{\text{\boldmath$\vartheta$}}\in U_{\text{\boldmath$\vartheta$}}}\left|\norm{\R\left(\tuvartheta\right)}^2_{\rm{F}}\hspace{-0.025cm}-\hspace{-0.025cm}\norm{\R\left(\uvartheta\right)}^2_{\rm{F}}\hspace{-0.025cm}-\hspace{-0.025cm}2\hspace{0.05cm}\Tr\left(\E(\tuvartheta,\uvartheta)\hR_x^{\dagger}\right)\right|\hspace{-0.025cm}\underset{\text{(II)}}{\leq}\nonumber\\
	&\sup_{\widetilde{\text{\boldmath$\vartheta$}}\in U_{\text{\boldmath$\vartheta$}}}\hspace{-0.05cm}\left\{\left|\norm{\R\left(\tuvartheta\right)}^2_{\rm{F}}\hspace{-0.05cm}-\hspace{-0.05cm}\norm{\R\left(\uvartheta\right)}^2_{\rm{F}}\right|\hspace{-0.05cm}+\hspace{-0.05cm}2\left|\Tr\left(\E(\tuvartheta,\uvartheta)\hR_x^{\dagger}\right)\right|\right\}\hspace{-0.075cm}\underset{\text{(III)}}{\leq}\nonumber\\
	&\sup_{\widetilde{\text{\boldmath$\vartheta$}}\in U_{\text{\boldmath$\vartheta$}}}\hspace{-0.05cm}\left\{\left|\norm{\R\left(\tuvartheta\right)}^2_{\rm{F}}\hspace{-0.05cm}-\hspace{-0.05cm}\norm{\R\left(\uvartheta\right)}^2_{\rm{F}}\right|\hspace{-0.05cm}+\hspace{-0.05cm}2\norm{\E(\tuvartheta,\uvartheta)}_{\rm{F}}\norm{\hR_x}_{\rm{F}}\right\}\hspace{-0.075cm}\underset{\text{(IV)}}{\leq}\nonumber\\
	&\frac{\epsilon}{2}\cdot\frac{1}{1+\norm{\text{\boldmath$R$}_x}_{\rm{F}}}+2\cdot\frac{\epsilon}{4}\cdot\frac{1}{1+\norm{\text{\boldmath$R$}_x}_{\rm{F}}}\norm{\hR_x}_{\rm{F}}=\widehat{\Delta}_T(\epsilon,\eta),
\end{align*}
for $T\geq T_0(\epsilon,\eta)$, where we have used
\begin{enumerate}[(I)]
	\item $\widehat{Q}_T(\tuvartheta)=-\norm{\R\left(\tuvartheta\right)}^2_{\rm{F}}-\norm{\hR_x}^2_{\rm{F}}+2\hspace{0.05cm}\Tr\left(\R(\tuvartheta)\hR_x^{\dagger}\right)$;
	\item $\left|a-b\right|\leq|a|+|b|$ (triangle inequality);
	\item $\Tr\left(\hspace{-0.05cm}\A\B^{\dagger}\hspace{-0.05cm}\right)\hspace{-0.05cm}\leq\hspace{-0.05cm}\norm{\A}_{\rm{F}}\hspace{-0.05cm}\norm{\B}_{\rm{F}}$ (Cauchy-Schwarz inequality);
	\item $\tuvartheta\in U_{\text{\boldmath$\vartheta$}}\;\Longrightarrow\;\norm{\tuvartheta-\uvartheta}_2<\rho_{\text{min}}\;\Longrightarrow$\; \eqref{minconditionAppB}.
\end{enumerate}
Thus, $\widehat{Q}_T(\tuvartheta)$ is stochastically equicontinuous, i.e., (b) holds, which implies that (iv) holds. In conclusion, conditions (i)--(iv) hold, hence \eqref{consistencyAppB} follows, and $\hutheta_{{\cpd}}, \hA_{{\cpd}}$ are consistent.
\end{proof}

\section{Derivation of the DOAs Alternating LS equations}\label{AppC}
Let us begin by rewriting the cost function \eqref{costfunctionLS} as
\begin{equation}\label{LScostinTraceform}
\begin{gathered}
C_{\LS}\left(\utheta,\A\right)=\norm{\tenR\left(\utheta,\A\right)-\htenR_x}^2_{\rm{F}}=\norm{\R_x-\hR_x}^2_{\rm{F}}=\\
\Tr\left(\R_x\R_x^{\dagger}\right)-2\cdot\Tr\left(\R_x\hR_x^{\dagger}\right)+\Tr\left(\hR_x\hR_x^{\dagger}\right),
\end{gathered}
\end{equation}
where $\R_x$ depends on $\utheta$ and $\A$ as prescribed in \eqref{blockmatrixRx}. Now, differentiating \eqref{LScostinTraceform} w.r.t.\ $\theta_d$ gives
\begin{equation}\label{derivativeLS_wrt_theta}
\frac{\partial C_{\LS}}{\partial\theta_d}=\frac{\partial}{\partial\theta_d}\Tr\left(\R_x\R_x\right)-2\cdot\frac{\partial}{\partial\theta_d}\Tr\left(\R_x\hR_x\right),
\end{equation}
where have have used $\R_x=\R_x^{\dagger}$ as well as $\hR_x=\hR_x^{\dagger}$. 

Starting with the first term in \eqref{derivativeLS_wrt_theta}, substituting $\R_x$ with \eqref{blockmatrixRx},
\begin{equation}\label{derivativeRxFirstterm}
\begin{aligned}
&\frac{\partial}{\partial\theta_d}\Tr\left(\R_x\R_x\right)=\\
&\frac{\partial}{\partial\theta_d}\sum_{d_1,d_2=1}^{D}\Tr\big(\left(\F(\theta_{d_1})\otimes\A_{d_1}\right)\left(\F(\theta_{d_2})\otimes\A_{d_2}\right)\big)=\\
&\frac{\partial}{\partial\theta_d}\sum_{d_1,d_2=1}^{D}\Tr\big(\left(\F(\theta_{d_1})\F(\theta_{d_2})\right)\otimes\left(\A_{d_1}\A_{d_2}\right)\big)=\\
&\frac{\partial}{\partial\theta_d}\sum_{d_1,d_2=1}^{D}\Tr\big(\F(\theta_{d_1})\F(\theta_{d_2})\big)\Tr\big(\ua_{d_1}\ua^{\dagger}_{d_1}\ua_{d_2}\ua^{\dagger}_{d_2}\big)=\\
&\frac{\partial}{\partial\theta_d}\Bigg(2\hspace{-0.04cm}\sum_{\substack{k=1 \\ k\neq d}}^{D}\hspace{-0.04cm}\left|\ua^{\dagger}_{d}\ua_{k}\right|^2\hspace{-0.04cm}\big(\uc(\theta_{d})^{\tps}\uc(\theta_{k})\big)^2\hspace{-0.04cm}+\hspace{-0.04cm}\underbrace{\overbrace{\norm{\uc(\theta_d)}^4_{2}}^{=2}\norm{\ua_d}^4_{2}}_{\Rightarrow \text{independent of $\theta_d$}}\Bigg)\hspace{-0.04cm}=\\
&4\sum_{\substack{k=1 \\ k\neq d}}^{D}\left|\ua^{\dagger}_{d}\ua_{k}\right|^2\Big(1+\cos(\theta_{k})\cos(\theta_{d})+\sin(\theta_{k})\sin(\theta_{d})\Big)\\
&\quad\quad\quad\quad\;\;\;\;\cdot\Big(\cos(\theta_{d})\sin(\theta_{k})-\sin(\theta_{d})\cos(\theta_{k})\Big)=\\
&\alpha_1\cos(\theta_d)-\beta_1\sin(\theta_d)+\gamma_1\cos(2\theta_d)-\delta_1\sin(2\theta_d),
\end{aligned}
\end{equation}
where we have defined
\begin{equation}\label{constantsfirstterm}
\begin{gathered}
\alpha_1\triangleq4\sum_{\substack{k=1 \\ k\neq d}}^{D}\left|\ua^{\dagger}_{d}\ua_{k}\right|^2\sin(\theta_k),\; \beta_1\triangleq4\sum_{\substack{k=1 \\ k\neq d}}^{D}\left|\ua^{\dagger}_{d}\ua_{k}\right|^2\cos(\theta_k),\\
\gamma_1\triangleq2\sum_{\substack{k=1 \\ k\neq d}}^{D}\left|\ua^{\dagger}_{d}\ua_{k}\right|^2\sin(2\theta_k),\; \delta_1\triangleq2\sum_{\substack{k=1 \\ k\neq d}}^{D}\left|\ua^{\dagger}_{d}\ua_{k}\right|^2\cos(2\theta_k).
\end{gathered}
\end{equation}
Moving to the second term in \eqref{derivativeLS_wrt_theta}, observe first that
\begin{equation}\label{derivativeRx}
\nabla_{\theta_d}\R_x\hspace{-0.03cm}=\hspace{-0.03cm}\frac{\partial \R_x}{\partial\theta_d}\hspace{-0.03cm}=\hspace{-0.03cm}\frac{\partial}{\partial\theta_d}\sum_{d'=1}^{D}{\F(\theta_{d'})\otimes\A_{d'}}\hspace{-0.03cm}=\hspace{-0.03cm}\frac{\partial\F(\theta_d)}{\partial\theta_d}\otimes\A_d,
\end{equation}
where
\begin{equation}\label{gradFtheta}
\frac{\partial\F(\theta_d)}{\partial\theta_d}\hspace{-0.05cm}=\hspace{-0.05cm}\begin{bmatrix}
0\hspace{-0.025cm}&\hspace{-0.025cm}-\sin(\theta_d)\hspace{-0.025cm}&\hspace{-0.025cm}\cos(\theta_d)\\
-\sin(\theta_d)\hspace{-0.025cm}&\hspace{-0.025cm}-\sin(2\theta_d)\hspace{-0.025cm}&\hspace{-0.025cm}\cos(2\theta_d)\\
\cos(\theta_d)\hspace{-0.025cm}&\hspace{-0.025cm}\cos(2\theta_d)\hspace{-0.025cm}&\hspace{-0.025cm}\sin(2\theta_d)\end{bmatrix}\hspace{-0.05cm}\triangleq\hspace{-0.05cm}
\nabla_{\theta_d}\F(\utheta).
\end{equation}
Substituting \eqref{gradFtheta} into \eqref{derivativeRx}, and using \eqref{covmatofxest}, we now have
\begin{equation}\label{derivativeRxSecondterm}
\begin{aligned}
&\frac{\partial}{\partial\theta_d}\Tr\left(\R_x\hR_x\right)=\Tr\left(\nabla_{\theta_d}\R_x\hR_x\right)=\\
&\Tr\left({\begin{bmatrix}
0\hspace{-0.025cm}&\hspace{-0.025cm}-\sin(\theta_d)\A_d\hspace{-0.025cm}&\hspace{-0.025cm}\cos(\theta_d)\A_d\\
-\sin(\theta_d)\A_d\hspace{-0.025cm}&\hspace{-0.025cm}-\sin(2\theta_d)\A_d\hspace{-0.025cm}&\hspace{-0.025cm}\cos(2\theta_d)\A_d\\
\cos(\theta_d)\A_d\hspace{-0.025cm}&\hspace{-0.025cm}\cos(2\theta_d)\A_d\hspace{-0.025cm}&\hspace{-0.025cm}\sin(2\theta_d)\A_d\end{bmatrix}}\hR_x\right)=\\
&\alpha_2\cos(\theta_d)-\beta_2\sin(\theta_d)+\gamma_2\cos(2\theta_d)-\delta_2\sin(2\theta_d),
\end{aligned}
\end{equation}
where we have defined
\begin{equation}\label{constantssecondterm}
\begin{gathered}
\alpha_2\triangleq 2\ua^{\dagger}_d\hR_x^{(1,3)}\ua_d,\quad\beta_2\triangleq 2\ua^{\dagger}_d\hR_x^{(1,2)}\ua_d,\\
\gamma_2\triangleq 2\ua^{\dagger}_d\hR_x^{(2,3)}\ua_d,\;\;\delta_2\triangleq 2\ua^{\dagger}_d\left(\hR_x^{(2,2)}-\hR_x^{(3,3)}\right)\ua_d.
\end{gathered}
\end{equation}
Finally, substituting \eqref{derivativeRxFirstterm} and \eqref{derivativeRxSecondterm} into \eqref{derivativeLS_wrt_theta}, we obtain the LS equations for $\theta_d$, 
\begin{equation}\label{LSequationsthetad}
\begin{aligned}
&\frac{\partial \widetilde{C}_{\LS}}{\partial\theta_d}=\alpha\cos(\theta_d)-\beta\sin(\theta_d)+\gamma\cos(2\theta_d)-\delta\sin(2\theta_d)=\\
&\cos(\theta_d)\left(\alpha-\beta\tan(\theta_d)\right)+\cos(2\theta_d)\left(\gamma-\delta\tan(2\theta_d)\right)=0,
\end{aligned}
\end{equation}
for every $d\in\{1,\ldots,D\}$, where we have defined
\begin{equation}\label{constantsLSequation}
\begin{gathered}
\alpha\triangleq\alpha_1-2\alpha_2,\quad\beta\triangleq 2\beta_2-\beta_1,\\
\gamma\triangleq\gamma_1-2\gamma_2,\quad\delta\triangleq 2\delta_2-\delta_1.
\end{gathered}
\end{equation}
For the alternating LS equations, solving for $\theta_{d}$ while fixing all the other parameters (as in \eqref{costAJDwrtsingletheta}), we substitute $\hA$ and $\{\widehat{\theta}_k\}_{k\neq d}$ with $\A$ and $\{\theta_k\}_{k\neq d}$, resp., everywhere in \eqref{constantsLSequation}.

At this point, introducing the (invertible) transformation $\tau\triangleq\tan\left(\frac{\theta_d}{2}\right)$ and using basic trigonometric identities, we have
\begin{align}
&\cos\left(\theta_d\right)=\frac{1-\tau^2}{1+\tau^2},\;\cos\left(2\theta_d\right)=\frac{(1-\tau^2)^2-4\tau^2}{(1-\tau^2)^2+4\tau^2},\label{transformationtantheta1}\\
&\tan\left(\theta_d\right)=\frac{2\tau}{1-\tau^2},\;\tan\left(2\theta_d\right)=\frac{4\tau(1-\tau^2)}{1-6\tau^2+\tau^4}.\label{transformationtantheta2}
\end{align}
Finally, substituting \eqref{transformationtantheta1}--\eqref{transformationtantheta2} into \eqref{LSequationsthetad}, we obtain
\begin{equation}\label{transfomredLSeq}
(3\gamma+\alpha)\tau^4+2\beta\tau^3+2\gamma\tau^2+(4\delta+2\beta)\tau-(\alpha+\gamma)=0.
\end{equation}

\section{Computation of the Score and the CRLB}\label{AppD}
In order to obtain closed-form expressions of the score w.r.t.\ each element of the vector of unknowns $\uvarphi$, using the chain rule, we may alternatively use \eqref{score_wrt_varphi}. Hence, our goal now is to compute the gradient of $\R_y$ w.r.t. each element, and the gradient the log-likelihood $\mathcal{L}(\uvarphi)$ w.r.t.\ $\R_y$.

Starting with the gradient of $\R_y$ w.r.t.\ the real and imaginary parts of $\{A_{md}\}$, using \eqref{covaiancematrixofr} and \eqref{blockmatrixRx} we have
\begin{gather}
\nabla_{\Re\{A_{md}\}}\R_y=\frac{\partial}{\partial\Re\{A_{md}\}}\left(\sum_{d=1}^{D}{\F(\theta_d)\otimes\A_d}+\sigma_v^2\I_{3M}\right)\nonumber\\
=\F(\theta_d)\otimes\frac{\partial\ua_d\ua_d^{\dagger}}{\partial\Re\{A_{md}\}}=\F(\utheta)\otimes\left(\ue_m\ua_d^{\dagger}+\ua_d\ue_m^{\tps}\right),\label{gradrealA_App}
\end{gather}
for all $m\in\{1,\ldots,M\}$, and in the same fashion
\begin{equation}\label{gradimagA_App}
\nabla_{\Im\{A_{\tilde{m}d}\}}\R_y=\jmath\cdot\F(\utheta)\otimes\left(\ue_{\tilde{m}}\ua_d^{\dagger}-\ua_d\ue_{\tilde{m}}^{\tps}\right),
\end{equation}
for all $\tilde{m}\in\{2,\ldots,M\}$ (recall $A_{1d}\in\Rset_{\geq0}$), where \eqref{gradrealA_App}--\eqref{gradimagA_App} are for all $d\in\{1,\ldots,D\}$. Next, using \eqref{covaiancematrixofr}, \eqref{derivativeRx} and \eqref{gradFtheta}, the gradient of $\R_y$ w.r.t.\ $\theta_d$ is given by (for all $d\in\{1,\ldots,D\}$)
\begin{equation}\label{gradthetad_App}
\nabla_{\theta_d}\R_y\hspace{-0.02cm}=\hspace{-0.02cm}\nabla_{\theta_d}\R_x\hspace{-0.02cm}=\hspace{-0.02cm}\nabla_{\theta_d}\F(\utheta)\otimes\A_d, \forall \;d\in\{1,\ldots,D\}.
\end{equation}
Lastly, the gradient of $\R_y$ w.r.t.\ $\sigma_v^2$ reads
\begin{equation}\label{gradsigmav_App}
\nabla_{\sigma_v^2}\R_y=\frac{\partial}{\partial\sigma_v^2}\left(\sum_{d=1}^{D}{\F(\theta_d)\otimes\A_d}+\sigma_v^2\I_{3M}\right)=\I_{3M}.
\end{equation}

Having obtained $\nabla_{\varphi_i}\R_y$ for all $i\in\{1,\ldots,K_{\varphi}\}$, namely the gradient w.r.t.\ all the unknowns, we move proceed to the gradient of log-likelihood w.r.t.\ $\R_y$. Recall from \eqref{loglikelihood} that
\begin{equation*}
\mathcal{L}(\uvarphi)\triangleq -T\cdot\left(\log\det\R_y + \Tr\left(\hR_y\R_y^{-1}\right)\right)+c,
\end{equation*}
where $c$ is independent of $\uvarphi$ and $\R_y$. Hence, based on well-known matrix functions derivatives (e.g., \cite{petersen2008matrix}), using 
\begin{align}
&\frac{\partial\log\det\R_y}{\partial\R_y} = \left(\R_y^{-1}\right)^{\tps}\in\Cset^{3M\times3M},\\
&\frac{\partial\Tr\left(\hR_y\R_y^{-1}\right)}{\partial\R_y}=-\left(\R_y^{-1}\hR_y\R_y^{-1}\right)^{\tps}\in\Cset^{3M\times3M},
\end{align}
we obtain
\begin{align}\label{score_wrt_covmatR_App}
\hspace{-0.1cm}\nabla_{\bm{R}_y}\mathcal{L}&=-T\cdot\frac{\partial}{\partial\R_y}\left(\log\det\R_y + \Tr\left(\hR_y\R_y^{-1}\right)\right)\\
&=-T\cdot\left[\R_y^{-1}\left(\I_{3M}-\hR_y\R_y^{-1}\right)\right]^{\tps}\in\Cset^{3M\times3M}.
\end{align}
Substituting \eqref{gradrealA_App}--\eqref{gradsigmav_App} and \eqref{score_wrt_covmatR_App} into \eqref{score_wrt_varphi}, we obtain closed-form expressions for the score. Further, substituting \eqref{gradrealA_App}--\eqref{gradsigmav_App} into \eqref{FIM_CN_element}, we obtain closed-form expressions for the FIM elements, which, upon inversion, gives the CRLB.

\section{Asymptotic ML Estimation based on $\hR_y$}\label{AppE}
As mentioned in Subsection \ref{subsec:suboptimality}, by virtue of the central limit theorem,
\begin{equation}\label{CLT_cov_mat_App}
\text{vec}^*\left(\hR_y\right)\triangleq\hur_y\xrightarrow[\quad\;]{d}\mathcal{CN}\left(\ur_y,\mGamma_{\varepsilon},\C_{\varepsilon}\right).
\end{equation}
Our goal here is to show that ML estimation of $\uvarphi$ based only on $\hur_y$, asymptotically amounts to OWNLLS \eqref{AsymptoticMLEviaOWLS}. To this end, we first derive closed-form expressions of the elements of the covariance and pseudo-covariance matrices $\mGamma_{\varepsilon}$ and $\C_{\varepsilon}$, resp.

By definition, $\mGamma_{\varepsilon}, \C_{\varepsilon}$ are also the covariance and pseudo-covariance matrices of the vector of errors $\uvarep=\text{vec}\left(\Ep\right)$ in estimation of $\R_y$. Therefore, we may equivalently compute the covariance and pseudo-covariance of $\Ep$, which explicitly determine $\mGamma_{\varepsilon}$ and $\C_{\varepsilon}$ by the injective $\text{vec}^*(\cdot)$ mapping. Thus,
\begin{equation}\label{firststep}
\begin{gathered}
\Eset\left[\ep_{ij}\ep_{k\ell}^*\right]=\Eset\left[\widehat{R}_{y_{ij}}\widehat{R}^*_{y_{k\ell}}\right]-R_{y_{ij}}R^*_{y_{k\ell}}=\\
\frac{1}{T^2}\sum_{t_1,t_2=1}^{T}{\Eset\left[y_i[t_1]y_j^*[t_1]y_k^*[t_2]y_\ell[t_2]\right]}-R_{y_{ij}}R^*_{y_{k\ell}}.
\end{gathered}
\end{equation}
Using the fact that $\uy[t]$ is proper, we have
\begin{equation}\label{y_is_proper}
\begin{gathered}
\Eset\left[y_i[t_1]y_j[t_2]\right]=\Eset\left[y_i^*[t_1]y^*_j[t_2]\right]=0,\\
\forall i,j\in\{1,\ldots,M\},\; \forall t_1,t_2\in\{1,\ldots,T\},
\end{gathered}
\end{equation}
hence we may write the summand in \eqref{firststep} as
\begin{multline}\label{fourth_moment_formula}
\Eset\left[y_i[t_1]y_j^*[t_1]y_k^*[t_2]y_\ell[t_2]\right]=\\
\begin{cases}
\kappa_y[i,j,\ell,k]+R_{y_{ij}}R_{y_{k\ell}}^*+R_{y_{ik}}R_{y_{j\ell}}^*, & t_1=t_2\\
R_{y_{ij}}R_{y_{k\ell}}^*, & t_1\ne t_2
\end{cases}.
\end{multline}
Substituting \eqref{fourth_moment_formula} into \eqref{firststep}, and repeating for $\Eset\left[\ep_{ij}\ep_{k\ell}\right]$ with exactly the same technique, we obtain after simplification
\begin{align}
\Eset\left[\ep_{ij}\ep_{k\ell}^*\right]&=\frac{1}{T}\left(\kappa_y[i,j,\ell,k]+R_{y_{ik}}R^*_{y_{j\ell}}\right),\label{errorcovariance1App}\\
\Eset\left[\ep_{ij}\ep_{k\ell}\right]&=\frac{1}{T}\left(\kappa_y[i,j,k,\ell]+R_{y_{i\ell}}R^*_{y_{jk}}\right),\label{errorcovariance2App}
\end{align}
for all $i,j,k,\ell\in\{1,\ldots,M\}$. Note that in the particular case of CN sources, it follows that $\uy[t]$ is CN, and therefore $\kappa_y[i,j,\ell,k]$ vanishes for all $i,j,k,\ell\in\{1,\ldots,M\}$, which gives \eqref{covandpseudocovGaussian}. Thus, \eqref{errorcovariance1App} and \eqref{errorcovariance2App} are the closed-form expressions for all the associated elements (in compliance with the $\text{vec}^*(\cdot)$ mapping) of $\mGamma_{\varepsilon}$ and $\C_{\varepsilon}$, resp., as required.

Next, we turn to the asymptotic ML estimation of $\uvarphi$ based on $\hR_y$, whose asymptotic distribution is prescribed in \eqref{CLT_cov_mat_App}. First, notice that, according to \eqref{errorcovariance1App}--\eqref{errorcovariance2App}, all the covariances and pseudo-covariances are a multiplication of $\tfrac{1}{T}$ by a factor independent of $T$. Accordingly, we define
\begin{equation}\label{covariancesdecomposition}
\tmGamma_{\varepsilon}\triangleq T\cdot\mGamma_{\varepsilon},\quad \tC_{\varepsilon}\triangleq T\cdot\C_{\varepsilon},
\end{equation}
such that $\tmGamma_{\varepsilon}$ and $\tC_{\varepsilon}$ depend only on the elements of $\R_y$, and more specifically, are {\myfontb\emph{independent}} of $T$. With these notations, we further define
\begin{equation}\label{errorCovtilde}
\tR_{\varepsilon}\triangleq\begin{bmatrix}
\tmGamma_{\varepsilon} & \tC_{\varepsilon} \\
\tC_{\varepsilon}^* & \tmGamma_{\varepsilon}^*\end{bmatrix}\;\Longrightarrow\;\R_{\varepsilon}=\frac{1}{T}\cdot\tR_{\varepsilon},
\end{equation}
and the auxiliary matrices (to be used shortly)
\begin{equation}\label{auxiliarymatrixP}
\P_{\varepsilon}\triangleq\mGamma^*_{\varepsilon}-\C^*_{\varepsilon}\mGamma^{-1}_{\varepsilon}\C_{\varepsilon}\;\Longrightarrow\;\tP_{\varepsilon}\triangleq T\cdot\P_{\varepsilon},
\end{equation}
which implies that $\tR_{\varepsilon}$ and $\tP_{\varepsilon}$ are also independent of $T$.

Now, using the asymptotic distribution \eqref{CLT_cov_mat_App}, the probability density function of $\hur_y$ reads
\begin{equation}\label{asymptoticCNpdfApp}
p_{\hat{\text{\boldmath $r$}}_y}\left(\hur_y;\uvarphi\right)\triangleq{\frac{e^{-\frac{1}{2}\left[\uvarep^{\dagger}\;\;\uvarep^{\tps}\right]\R_{\varepsilon}^{-1}\left[\uvarep^{\tps}\;\;\uvarep^{\dagger}\right]^{\tps}}}{\pi^{K_r}\sqrt{\det\left(\mGamma_{\varepsilon}\right)\det\left(\P_{\varepsilon}\right)}}}.
\end{equation}
Thus, by definition, the MLE $\huvarphi_{\ML}$ of $\uvarphi$ based on $\hur_y$, is asymptotically given by
\begin{align}
&\underset{\text{{\boldmath $\varphi$}$\in\Rset^{K_{\varphi}\times1}$}}{\argmax}p_{\hat{\text{\boldmath $r$}}_y}\left(\hur_y;\uvarphi\right)\underset{(i)}{=}\underset{\text{{\boldmath $\varphi$}$\in\Rset^{K_{\varphi}\times1}$}}{\argmax}\log p_{\hat{\text{\boldmath $r$}}_y}\left(\hur_y;\uvarphi\right)\underset{(ii)}{=}\nonumber\\
&\underset{\text{{\boldmath $\varphi$}$\in\Rset^{K_{\varphi}\times1}$}}{\argmin}\log\det\mGamma_{\varepsilon}\P_{\varepsilon}+\left[\uvarep^{\dagger}\;\;\uvarep^{\tps}\right]\R_{\varepsilon}^{-1}\left[\uvarep^{\tps}\;\;\uvarep^{\dagger}\right]^{\tps}\underset{(iii)}{=}\nonumber\\
&\underset{\text{{\boldmath $\varphi$}$\in\Rset^{K_{\varphi}\times1}$}}{\argmin}\underbrace{\log\det\tmGamma_{\varepsilon}\tP_{\varepsilon}}_{\text{independent of $T$}}+\underbrace{T\cdot\left[\uvarep^{\dagger}\;\;\uvarep^{\tps}\right]\tR_{\varepsilon}^{-1}\left[\uvarep^{\tps}\;\;\uvarep^{\dagger}\right]^{\tps}}_{\text{linear in $T$}}\underset{(iv)}{\approx}\nonumber\\
&\underset{\text{{\boldmath $\varphi$}$\in\Rset^{K_{\varphi}\times1}$}}{\argmin}\left[\uvarep^{\dagger}\;\;\uvarep^{\tps}\right]\R_{\varepsilon}^{-1}\left[\uvarep^{\tps}\;\;\uvarep^{\dagger}\right]^{\tps}\triangleq\huvarphi_{\OWNLLS},
\end{align}
which is the OWNLLS estimate, and we have used the fact that $\log$ is an increasing monotonic function in $(i)$, omitted irrelevant constants w.r.t.\ $\uvarphi$ in $(ii)$, as well as in $(iii)$ due to
\begin{align*}\label{logpropertyApp1}
&\log\det\mGamma_{\varepsilon}\P_{\varepsilon}=\log\det\left(\frac{1}{T^2}\cdot\tmGamma_{\varepsilon}\tP_{\varepsilon}\right)=\\
&\log\left(T^{-2K_r}\det\tmGamma_{\varepsilon}\tP_{\varepsilon}\right)=\underbrace{-2K_r\cdot\log T}_{\text{independent of \boldmath$\varphi$}}+\log\det\tmGamma_{\varepsilon}\tP_{\varepsilon},
\end{align*}
and the approximation $(iv)$ holds for a sufficiently large sample size $T\gg\left(\log\det\tmGamma_{\varepsilon}\tP_{\varepsilon}\right)/\left(\left[\uvarep^{\dagger}\;\;\uvarep^{\tps}\right]\tR_{\varepsilon}^{-1}\left[\uvarep^{\tps}\;\;\uvarep^{\dagger}\right]^{\tps}\right)$.

\bibliography{Bibfile}
\bibliographystyle{unsrt}

\end{document}